\newtheorem{Th}{Theorem}
\newtheorem{Def}{Definition}
\newtheorem{lemma}{Lemma}
\begin{document}
\renewcommand{\refname}{\begin{flushleft}{{\bf\small
Reference}}\end{flushleft}}
\title {{\bf New solution of Einstein-Yang-Mills equations }
}

\author{Yuewen Chen\thanks{Yau Mathematical Sciences Center, Tsinghua University, Beijing 100084, P.R. China.  E-mail: yuewen\_chern@amss.ac.cn},
\quad Jie Du\thanks{Yau Mathematical Sciences Center, Tsinghua University, Beijing, 100084, P.R. China. Yanqi Lake Beijing Institute of Mathematical Sciences and Applications, Beijing 101408, P.R. China.  E-mail: jdu@tsinghua.edu.cn}, 
\quad Shing-Tung Yau\thanks{Yau Mathematical Sciences Center, Tsinghua University, Beijing, 100084, P.R. China. Yanqi Lake Beijing Institute of Mathematical Sciences and Applications, Beijing 101408, P.R. China. Department of Mathematics, Harvard University, Cambridge, MA 02138, USA. E-mail: yau@math.harvard.edu}
}
\date{}
\maketitle

\centerline{\bf Abstract}
In this paper, we show the numerical solution for spherically symmetric $SU(2)$ Einstein-Yang-Mills (EYM) equations. 
We show the  existence of entropy weak solution for EYM.

\bigskip

\noindent
\textbf{Key Words:}Einstein-Yang-Mills equations.

\pagenumbering{arabic}

\bigskip

\tableofcontents
\section{Introduction}

The Einstein-Yang-Mills (EYM)  equation plays an important role in GR. In this paper, we aim to find  the stable solution of EYM equations with the $SU(2)$ gauge group numerically.  

\subsection{Einstein-Yang-Mills equations}
We first introduce the formulations and basic properties of EYM equations. We adopt the following static, spherically symmetric metric 
$$ g=-A e^{-2 \delta} dt^2 +\frac{dr^2}{A} +r^2 (d\theta^2 +\sin^2\theta d\phi ^2),$$ 
where $(\theta,\phi)$ is the spherical coordinate system, $r$ is the radius, $t$ is the coordinate time, $A=A(r)$, and $\delta=\delta(r)$. 
Denoting $\tau_1,\tau_2,\tau_3$ as the Pauli matrices, the spherically symmetric Yang-Mills connection with $SU(2)$ gauge group can be written in the following form
$$\mathfrak{A}=W(r) \tau_1  d\theta +\big(\cot \theta \tau_3+W(r) \tau_2 \big)\sin\theta d\phi,$$ 

The EYM equations with $SU(2)$ gauge potential have been derived in many papers \cite{s2,Bartnik,cho2,un4}, and take the following form
\begin{align}
	r^2 AW^{''} &=\Big(\frac{(W^2-1)^2}{r}+r(A-1)\Big)W^{'}+W(W^2-1)  \label{eq:4},\\
	r A^{'} &=1-\frac{(W^2-1)^2}{r^2}-A(2{W^{'}}^2+1)  \label{eq:5}, \\
	\delta^{'}&=-\frac{2(W^{'})^2}{r}\label{eq:3A},
\end{align}
with  boundary conditions
\begin{align}
	W(0) &=\pm 1,\quad W(\infty) =\mp 1,\\
	A(0) &=1,\quad \delta(0) =0. 
	\end{align}
In this system, Eq. (\ref{eq:4}) is the matter field equation for solving $W(r)$ in the Yang-Mills field, also called the Yang-Mills equation. Eqs. (\ref{eq:5}) and (\ref{eq:3A}) are Einstein equations to determine $A(r)$ and $\delta(r)$ in the metric, in which Eq. (\ref{eq:5}) is also called the Hamiltonian constraint equation. Notice that Eqs. (\ref{eq:4}) and (\ref{eq:5}) do not involve $\delta$, and hence one can first solve these two equations for $A(r)$ and $W(r)$ and then use (\ref{eq:3A}) to obtain $\delta(r)$. If the solution to the EYM equations satisfies $A(r_*)=0$ at some point $r_*$, then we call such a solution as a black hole solution and $r_*$ is the position of the event horizon. 

In 1988, Bartnik and McKinnon \cite{Bartnik} numerically discovered a global nontrivial static nonsingular particle-like solution that is not a black hole solution; this work sparked a great deal of interest in the general relativity community\cite{2, 3, Kun}.

The EYM equation has also caught the interest of specialists in the field of differential equations based on the numerical observation. J. Smoller and his associates published a number of publications \cite{s1,s2,s3,s4,s5}that represented the key advancements in the theoretical analysis. The $SU(2)$ EYM equations accept an infinite family of black hole solutions with a regular event horizon, as demonstrated by Smoller and Yau {\it et.al}, who conclusively demonstrated the existence of a globally defined smooth static solution . 
In the meanwhile, they established that there are an endless number of smooth, static, regular solutions to the EYM equations \cite{s1}.

 However, all solutions founded in history are not dynamically stable and therefore are not physical  \cite{un1,un2,un3,un4}.   In this paper, we present the high order schemes to solve EYM equations globally and show  a stable solution for EYM.

In this paper, we are interested in finding a stable black hole solution globally. Since the horizon $r_*$ is usually close to $r=0$ and more details of the solution concentrate in the region with a small $r$, we adopt the following coordinate transformation 
\begin{align}
	x&=\ln(r) \label{eq:t},\quad r=e^{x},
\end{align}
and then rewrite the equations (\ref{eq:4}) and (\ref{eq:5}) for $W$ and $A$ as 
\begin{align}
	c(r,A,W)W_x=A W_{xx}+(1-W^2)W,\label{eq:7}\\
	A_x=-(1+\frac{2}{r^2}W_x^2) A+1-\frac{1}{r^2}(W^2-1)^2, \label{eq:8}
\end{align}
where 
$$c(r,A,W)=2A-1+\frac{1}{r^2}(W^2-1)^2.$$
The new idea in this paper, which is different from classical methods, is to consider the steady state of the parabolic version of EYM equations. Instead of solving the above static system of equations (\ref{eq:7}) and (\ref{eq:8}) directly, we raise the problem one-dimensional higher and consider the following time-dependent parabolic problem
\begin{align}
	W_t&+c(r,A,W)W_x =A W_{xx}+(1-W^2)W,\label{eq:9}\\
	A_x&=-(1+\frac{2}{r^2} W_x^2) A+1-\frac{1}{r^2}(W^2-1)^2  ,\label{eq:10}
\end{align}
where $A=A(x,t)$ and $W=W(x,t)$. By introducing suitable initial conditions, we aim to march this system in time to steady state numerically. We will discuss more details about the choices of the initial condition in the numerical examples section.

\subsection{High order WENO scheme}

Since only shooting methods were used to solve the static EYM equations directly and no one considers its time evolution version (\ref{eq:9})-(\ref{eq:10}) in literature, it is meaningful to investigate more suitable numerical methods. This problem forms a  convection-diffusion system with source terms. Since the diffusion coefficient in (\ref{eq:9}) is $A$, the system becomes degenerate for black hole solutions in which $A=0$ at the horizon. In this case, there will be a sharp front in the solution near the horizon. 

 In this paper, we are interested in solving this system by using  high order WENO methods.

\subsection{Contributions and organization of the paper}

The rest of this paper is organized as follows. In section 2 we give the definition of entropy weak solution of EYM equations and the jump condition and the convergence analysis of first order TVD  finite difference scheme.
In section 3, to make it simpler to compute the implicit scheme, we construct a new WENO scheme for the second derivative of the YM equation. In section 4, we construct a WENO scheme for constraint equation. In section 5,  we provide  numerical test to demonstrate the behavior of the  new WENO scheme and the numerical solution of EYM systems. Finally, we will give the conclusions in Section 6.

\section{ First order finite difference scheme and convergence analysis}

In this section, we would like to  study the jump condition of EYM and first order scheme.
Firstly, we define the weak solution of EYM equation that belong to BV class and satisfy the entropy condition and give the RH jump condition.
Secondly, we study the convergence of first-order schemes and prove that it  can converge to weak solutions.Moreover ,such  weak solution  also satisfy the entropy condition.

There is no a prior estimate for $A$ to guarantee that $A \geq 0$ during the evolution,  to maintain the parabolic properties, we define 
$\tilde{A}=\max(0,A)$. However, by numerical experience, in the steady sate, we show that $A \geq 0$, then $A=\tilde{A}$ (See Figure \ref{fig9}). These guarantee the steady state solution of system (\ref{eq:18}) satisfy the static EYM equation.

\subsection{Entropy weak solution and jump condition }

So, we are interested in the following modified  problem

\begin{align*}
    A_x&=-A(1+\frac{2}{r^2}W_x^2)+1-\frac{1}{r^2}(1-W^2)^2,\\
    \tilde{A}&=\max(A,0),\\
    W_t+B(x,W) W_x &=\tilde{A} W_{xx}+(1-W^2)W ,
    \end{align*}
where $$ B(x,W)=2A-1+\frac{1}{r^2}(1-W^2)^2,$$
and we denote 
 $g(W)=W(1-W^2)$, $a(x)=2A-1 $ and $\displaystyle f(W)=\int^W (1-s^2)^2 \ ds=W(1-\frac{2}{3}W^2+\frac{1}{5}W^4)$.
We can rewrite $B(x,W)W_x$ as 
\begin{align*}
    B(x,W)W_x&=(2A-1)W_x+\frac{1}{r^2}f(W)_x\\
    &=a(x)W_x+\frac{1}{r^2}f(W)_x.
\end{align*}
\begin{align}\label{eq:18}
    A_x&=-A(1+\frac{2}{r^2}W_x^2)+1-\frac{1}{r^2}(1-W^2)^2,\\
    \tilde{A}&=\max(A,0),\\
    W_t+a(x)W_x+\frac{1}{r^2}f(W)_x &=\tilde{A} W_{xx}+(1-W^2)W .\label{eq1113}
\end{align}

Following Wu and Yin \cite{Wu},
we give the definition of entropy weak solution for EYM equations. Let $Q_T=\{(t,x): 0<t<T,x\in[-5,5]\}$. 
\begin{Def}
A  function $W(x,t)\in L^\infty(Q_T)\bigcap BV(Q_T)$ is said to be an entropy weak solution of Eq. (\ref{eq1113}) if the following two conditions  holds:\\
(1) $A$ satisfy constraint equation (18).\\
(2) For any $ c \in \mathbb{R}$ and $\phi \in C_0^\infty(\mathbb{R},\mathbb{R}^+) $,
\begin{align}\label{eq:entro2}
  \iint_{Q_T} |W-c|\phi_t +(a(x)\phi)_x |W-c|+(\frac{1}{r^2}\phi)_x sgn(W-c)(f(W)-f(c))\nonumber\\
-(\tilde{A}\phi)_x|W-c|_x  +sgn(W-c) g(W) \phi \ dx dt \geq 0.  
\end{align}

\end{Def}

For more theory about scalar degenerate convection diffusion equation, one can see \cite{Wu}. Numerically, we are interested in the shock speed.
We derive the jump condition of $W$. Since $A$ is  Lipschitz continue, there is no jump condition for $A$.

 \textbf{Remark} Since Eq. (18) is a standard linear ODE, which can be solved as
 \begin{align*}
     A=e^{Q(-5)-Q(x)}+e^{-Q(x)}\int_{-5}^x (1-\frac{1}{r^2}(1-W^2)^2)e^{Q(s)} ds,
 \end{align*}
 where $$Q(x)=\int_{-5}^x 1+\frac{2}{r(s)^2}W_x^2(s) ds .$$

 We will give a jump condition for EYM equation. Let $\Gamma$ be a smooth curve across which $W$ has a jump discontinuity,where $\Gamma$ is given by $x=x(t)$. The shock speed of the discontinuity is given by $\displaystyle s=\frac{dx}{dt}$. Assume $P$ is any point on $\Gamma$, and $D$ be a small ball centered at $P.$ Let
 $\phi \in C_0^ \infty(D)$,then we have  the follow jump condition.
\begin{Th}
The shock speed $s$ is given by
$$ s[W]=\frac{1}{r^2}[f]+a[W]-\{\tilde{A}_x\}[W]-[\tilde{A}W_x],$$
where $[W]=W^+-W^-, \{ \tilde{A}_x \}=\frac{1}{2}(\tilde{A}_x^+ +\tilde{A}_x^-)$.
Moreover, in static case, 
we have the RH jump condition for static EYM black hole solution
$$ 0=\frac{1}{r^2}\frac{[f]}{[W]}+a-\{{A}_x\}-\frac{[{A}W_x]}{[W]}.$$
\end{Th}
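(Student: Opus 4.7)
The plan is to derive the jump relation from the weak form of the PDE by the standard Rankine--Hugoniot pill-box argument, adapted to handle the distributional product $\tilde{A}_xW_x$ at the shock. Using $\tilde{A}W_{xx}=(\tilde{A}W_x)_x-\tilde{A}_xW_x$, I would first recast equation~(\ref{eq1113}) in quasi-conservation form
\[
W_t+F_x=R,\qquad F:=a(x)W+\tfrac{1}{r^2}f(W)-\tilde{A}W_x,\qquad R:=g(W)+a'W+\Bigl(\tfrac{1}{r^2}\Bigr)_x f(W)-\tilde{A}_xW_x,
\]
so that the left-hand side is a genuine $(t,x)$-divergence and every term in $R$ except the last is an honest bounded function near $\Gamma$.

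Next, I would extract the plain weak formulation from Definition~1 by taking the constant $c$ in~(\ref{eq:entro2}) with $|c|>\|W\|_\infty$, so that the absolute value and sign collapse to $\pm(W-c)$ and $\pm 1$, and combining the two resulting inequalities. Testing the weak identity against $\phi\in C_0^\infty(D,\mathbb{R}^+)$ and integrating by parts in each of the two smooth pieces $D^\pm$ of $D\setminus\Gamma$, the exterior boundary contributions vanish because $\phi$ is compactly supported; what remains is a single line integral along $\Gamma$. With the outward unit normal to $D^-$ proportional to $(1,-s)$, its integrand is $\phi\bigl(s[W]-[F]\bigr)$, while the bounded zero-order pieces $g(W)+a'W+(1/r^2)_x f(W)$ contribute nothing once the ball $D$ is shrunk to $P$.

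The delicate step is the product $\tilde{A}_xW_x$ at $\Gamma$. Because $\tilde{A}$ is only Lipschitz, $\tilde{A}_x\in L^\infty$ possesses one-sided traces $\tilde{A}_x^\pm$ on $\Gamma$, while the jump of $W$ endows $W_x$ with an atomic part $[W]\,\delta_\Gamma$. Vol'pert's averaged product rule identifies the singular part of $\tilde{A}_xW_x$ as $\{\tilde{A}_x\}[W]\,\delta_\Gamma$, which contributes $-\{\tilde{A}_x\}[W]$ to the $\Gamma$-line integral. Equating the accumulated integrals and exploiting the arbitrariness of $\phi$ yields
\[
s[W]=a[W]+\tfrac{1}{r^2}[f]-[\tilde{A}W_x]-\{\tilde{A}_x\}[W].
\]

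For the static black-hole case, a stationary solution of the parabolic system forces any spatial discontinuity to be a standing shock with $s=0$, and $\tilde{A}=A$ on the physically relevant set $A\ge 0$; dividing through by $[W]\neq 0$ yields the second formula. The main obstacle is the rigorous justification of the Vol'pert-type averaging of $\tilde{A}_xW_x$: one has to combine the BV regularity built into Definition~1 with the explicit representation of $A$ from the Remark following it to ensure that $\tilde{A}_x$ admits well-defined one-sided traces on $\Gamma$ and that no further singular contribution is hidden in the quadratic $W_x^2$ appearing in the constraint equation~(\ref{eq:18}).
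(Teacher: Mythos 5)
The paper itself states this theorem without a written proof --- it only sets up the curve $\Gamma$, the point $P$, the ball $D$ and the test function $\phi\in C_0^\infty(D)$ --- so your pill-box argument, with the quasi-conservative rewriting $\tilde A W_{xx}=(\tilde A W_x)_x-\tilde A_xW_x$ and Vol'pert averaging of the nonconservative product $\tilde A_xW_x$, is the natural completion of that setup and the right framework. Your closing remarks also correctly identify the two points the paper glosses over: whether $\tilde A_x$ has well-defined one-sided traces on $\Gamma$, and the fact that $W_x^2$ in the constraint equation is not locally integrable once $W$ jumps, which undermines the claimed Lipschitz continuity of $A$ that the whole argument (and the identity $[aW]=a[W]$) relies on.

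There is, however, a concrete sign problem in the step where you assemble the line integral, and it does not wash out. From $W_t+F_x=R$ with $F=aW+\frac{1}{r^2}f(W)-\tilde AW_x$, the atomic part of the left-hand side on $\Gamma$ is $(-s[W]+[F])\delta_\Gamma$ (check against $W=W^-+[W]H(x-st)$, for which $W_t=-s[W]\delta$ and $F_x=[F]\delta$), while the Vol'pert atomic part of $R=\cdots-\tilde A_xW_x$ is $-\{\tilde A_x\}[W]\delta_\Gamma$. Equating the two gives $-s[W]+[F]=-\{\tilde A_x\}[W]$, that is,
\begin{equation*}
s[W]=a[W]+\frac{1}{r^2}[f]-[\tilde AW_x]+\{\tilde A_x\}[W],
\end{equation*}
with a \emph{plus} sign on the averaged term; this relative sign is independent of which side of $\Gamma$ you label $+$, since every term carries exactly one jump bracket and they all flip together. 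Your write-up asserts that the atom ``contributes $-\{\tilde{A}_x\}[W]$ to the $\Gamma$-line integral'' and then transcribes the theorem's formula, but the orientation bookkeeping as you describe it produces the opposite sign. You need either to exhibit an explicit computation, under a clearly stated interpretation of the distribution $\tilde AW_{xx}$, that genuinely yields $-\{\tilde A_x\}[W]$, or to flag that the formula obtained from the Vol'pert convention you invoke differs from the stated theorem in the sign of this term. As written, the proposal silently matches its conclusion to the target rather than deriving it.
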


\textbf{Remark} As $t \to \infty$,$A \geq0$, then $\tilde{A}=A$. However, it is difficult to prove this phenomena. By lots of numerical experience, we find that $\tilde{A}=A$ in steady state (One can see Fig. \ref{fig9} for more detail). So we get a jump condition for static EYM black hole solution
$$0= a+\frac{1}{r^2}\frac{[f]}{[W]}-\frac{[AW_x]}{[W]}-\{ A_x\}.$$

\subsection{Convergence analysis for TVD scheme}

In this section, we will construct the first order TVD scheme and  prove that the approximate solution could converge to the  entropy weak  solution.

We set up a grid: let $-5=x_0 <x_1<...<x_N=5$ denote a uniform grid with a mesh size $h=x_2-x_1.$
In time direction, $t=n \  dt,n\in \mathbb{Z}^+$. For a function $W(x,t)$, we use the notation $W^n_i$ to denote the value of $W$
at the mesh point $(i\  h, n \ dt).$ The  discrete $L^\infty$ norm ,the $L^1$ norm and the $BV$ seminorm are defined as follows:
\begin{align*}
    \|W\|_{\infty}&=\max_{0\leq i \leq N} |W_i|,\\
    \|W\|_{1}&=h\sum_{0\leq i \leq N} |W_i|,\\
    |W|_{BV}&=\sum_{0\leq i \leq N-1}|W_{i+1}-W_i|.
\end{align*}
Then a Lax-Friedrichs scheme for YM equation is given by

\begin{align} 
 A^n_{i+1}-A^n_i+hA^n_{i+1}(1+\frac{2}{r_i^2}\frac{(W^n_{i+1}-W^n_i)^2}{h^2}) &=\frac{h}{2}(F^n_{i+1}+F^n_i),\label{eq:24}\\
 \tilde{A}^n_i&=\max(A^n_i,0),\\
 \frac{1}{dt}(W_i^{n+1}-W^n_i)+\frac{a^n_i}{2h}(W^n_{i+1}-W^n_{i-1})+\frac{1}{r_i^2}\frac{1}{2h}(f^n_{i+1}-f^n_{i-1})  &=\frac{\alpha^n_{i+\frac{1}{2}}}{2h}(W^n_{i+1}-W^n_{i})\nonumber\\
 -\frac{\alpha^n_{i-\frac{1}{2}}}{2h}(W^n_{i}-W^n_{i-1})
 +\tilde{A}^n_i\frac{1}{h^2}(W^n_{i+1}-2W^n_i+W^n_{i-1})+g^n_i \label{eq136},
 \end{align}

where
$$ F=1-\frac{1}{r^2}(1-W^2)^2,$$
and the viscosity coefficient $\alpha_{i+\frac{1}{2}}$ can be chosen as 
 Lax-Friedrichs type 
\begin{align*}
    \alpha_{i+\frac{1}{2}} =\max_i( |a_{i}|+\frac{1}{r_i^2}(1-W^2_{i})^2 ).
\end{align*}

Next, we would present a-prior estimate  for scheme 
(\ref{eq136}).
\begin{lemma}
Under following CFL condition

\begin{equation}
    \frac{1}{2}>\frac{dt}{h}\max_i|\alpha^n_{i+\frac{1}{2}}|+\frac{2dt}{h^2}\max_i(|A^n_i|),dt<\frac{1}{4}. \label{eq137}
\end{equation}
We have 
$$\|W^{n+1}\|_{\infty} \leq 1.$$
\end{lemma}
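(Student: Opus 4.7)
The plan is to prove $\|W^{n+1}\|_\infty \leq 1$ by induction on $n$, viewing the update rule (\ref{eq136}) as a discrete function
\[
W^{n+1}_i = H(W^n_{i-1},W^n_i,W^n_{i+1}),
\]
with the coefficients $a^n_i$, $\alpha^n_{i\pm 1/2}$ and $\tilde A^n_i$ frozen at time level $n$. The strategy is the classical one for monotone finite difference schemes: I would show $H$ is nondecreasing in each of its three arguments on $[-1,1]^3$ under (\ref{eq137}), observe that at the two constant states $W\equiv\pm 1$ every undivided difference in (\ref{eq136}) vanishes and $g(\pm 1)=0$, so that $H(\pm 1,\pm 1,\pm 1)=\pm 1$, and conclude
\[
-1 = H(-1,-1,-1) \leq H(W^n_{i-1},W^n_i,W^n_{i+1}) \leq H(1,1,1) = 1
\]
whenever $|W^n_j|\leq 1$ for all $j$.

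First I would verify the off-diagonal monotonicity. Direct differentiation gives
\[
\frac{\partial H}{\partial W^n_{i\pm 1}} = \frac{dt}{2h}\Bigl(\alpha^n_{i\pm \frac{1}{2}} \mp a^n_i \mp \frac{1}{r_i^2}f'(W^n_{i\pm 1})\Bigr) + \frac{dt\,\tilde A^n_i}{h^2}.
\]
Since the inductive hypothesis gives $|W^n_{i\pm 1}|\leq 1$ and hence $f'(W^n_{i\pm 1}) = (1-(W^n_{i\pm 1})^2)^2 \in [0,1]$, and since by construction $\alpha^n_{i\pm \frac{1}{2}}$ dominates the $|a^n_j|+(1-(W^n_j)^2)^2/r_j^2$ combination, both off-diagonal derivatives are nonnegative without even using the CFL; the diffusion term $dt\,\tilde A^n_i/h^2$ only helps.

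The diagonal coefficient is where both halves of (\ref{eq137}) come in together:
\[
\frac{\partial H}{\partial W^n_i} = 1 - \frac{dt}{2h}\bigl(\alpha^n_{i+\frac{1}{2}}+\alpha^n_{i-\frac{1}{2}}\bigr) - \frac{2\,dt\,\tilde A^n_i}{h^2} + dt\, g'(W^n_i).
\]
Under the first half of (\ref{eq137}) the convection--diffusion deduction from $1$ is strictly less than $1/2$; meanwhile $g'(W) = 1 - 3W^2 \geq -2$ for $|W|\leq 1$. Combining,
\[
\frac{\partial H}{\partial W^n_i} > \frac{1}{2} + dt\bigl(1-3(W^n_i)^2\bigr) \geq \frac{1}{2} - 2\,dt,
\]
which is strictly positive exactly when $dt < 1/4$. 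This is the single place in the argument where the second half of (\ref{eq137}) is required.

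Finally, on the constant state $W^n_{i-1}=W^n_i=W^n_{i+1}=c$ with $c=\pm 1$, all three undivided differences and the three-point Laplacian in (\ref{eq136}) vanish and $g(c)=0$, giving $H(c,c,c)=c$; monotonicity together with the inductive hypothesis then yields $|W^{n+1}_i|\leq 1$, closing the induction. The main obstacle is the diagonal step: the reaction $g(W) = W(1-W^2)$ is \emph{not} monotone on $[-1,1]$ (in fact $g'(\pm 1)=-2$), so without enough self-damping on the diagonal of the convection--diffusion stencil the monotonicity of $H$ in its middle argument would fail near the boundary of the invariant region; the CFL $dt < 1/4$ is precisely the threshold that makes the scheme invariant-region preserving in the presence of this destabilizing reaction term.
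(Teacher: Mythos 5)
Your proof is correct, and it reaches the conclusion by a genuinely different (though closely related) packaging of the same estimates. The paper does not argue via monotonicity of the full update map $H$: instead it writes $W^{n+1}_i$ as a sum of three terms with nonnegative coefficients that add up to exactly $\tfrac{1}{2}$ (this is where the CFL condition $\tfrac{dt}{h}\max_i|\alpha^n_{i+1/2}|+\tfrac{2dt}{h^2}\max_i|A^n_i|<\tfrac12$ enters), plus the separate scalar map $Q(W)=\tfrac12 W+dt\,W(1-W^2)$, and then invokes its Lemma 2 to get $|Q(W^n_i)|\le\tfrac12$ for $|W^n_i|\le1$, whence $|W^{n+1}_i|\le\tfrac12\max(|W^n_{i-1}|,|W^n_i|,|W^n_{i+1}|)+\tfrac12\le1$. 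Your route proves instead that $H$ is nondecreasing in each argument on $[-1,1]^3$ and evaluates at the fixed constant states $\pm1$; the place where you need $dt<\tfrac14$ (keeping $\tfrac12+dt(1-3W^2)>0$ on the diagonal) is computationally identical to the paper's proof that $Q'>0$, so the two arguments hinge on the same inequality. What your version buys is a cleaner conceptual statement (invariant-region preservation of a monotone scheme, directly reusable for the entropy inequality in the paper's Lemma 5, which indeed introduces the same $H$ and its monotonicity); what the paper's version buys is that it never needs $H(\pm1,\pm1,\pm1)=\pm1$ and isolates the destabilizing reaction term into a one-variable lemma. One small caveat that applies equally to both arguments: the nonnegativity of the off-diagonal coefficients requires $\alpha^n_{i\pm1/2}\ge|a^n_i|+\tfrac{1}{r_i^2}f'(\cdot)$ with the derivative evaluated at a neighboring value of $W$, which does not literally follow from the stated formula $\alpha_{i+1/2}=\max_i\bigl(|a_i|+\tfrac{1}{r_i^2}(1-W_i^2)^2\bigr)$ because the radius and the $W$-argument are taken at different indices; your claim that this holds ``without even using the CFL'' inherits exactly the same imprecision as the paper's corresponding step, and both would be repaired by taking a slightly larger global Lax--Friedrichs coefficient such as $\max_j|a_j|+\max_j r_j^{-2}$.
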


\begin{proof}
      \begin{align*}
        W^{n+1}_i&=W^n_i-\frac{dt}{2h}a^n_i(W^n_{i+1}-W^n_{i-1})-\frac{1}{r_i^2}\frac{dt}{2h}f^{'}(\theta_i)(W^n_{i+1}-W^n_{i-1})\\
        &+\frac{dt}{2h}\alpha^n_{i+\frac{1}{2}}(W^n_{i+1}-W^n_{i}) -\frac{dt}{2h}\alpha^n_{i-\frac{1}{2}}(W^n_{i}-W^n_{i-1})\\
        &+\frac{dt}{h^2}\tilde{A}^n_i(W^n_{i+1}-2W^n_i+W^n_{i-1})+dtg^n_i\\
        &=W_i^n(\frac{1}{2}-\frac{dt}{2h}\alpha^n_{i+\frac{1}{2}} -\frac{dt}{2h}\alpha^n_{i-\frac{1}{2}}-\frac{2dt}{h^2}\tilde{A}^n_i)\\
        &+W^n_{i+1}(-\frac{dt}{2h}a^n_i-\frac{1}{r_i^2}\frac{dt}{2h}f^{'}(\theta^n_i)+\frac{dt}{2h}\alpha^n_{i+\frac{1}{2}}+\frac{dt}{h^2}\tilde{A}^n_i)\\
        &+W^n_{i-1}(\frac{dt}{2h}a^n_i+\frac{1}{r_i^2}\frac{dt}{2h}f^{'}(\theta^n_i)+\frac{dt}{2h}\alpha^n_{i-\frac{1}{2}}+\frac{dt}{h^2}\tilde{A}^n_i)+\frac{1}{2}W^n_i+dt g^n_i,
    \end{align*}
    where $\theta^n_i$ is between $W^n_{i-1}$ and $W^n_{i+1}$.
   Assume that $\|W^n\|_{\infty}\leq 1$, Using Lemma \ref{Lemma2}, we have 
    $$|\frac{1}{2}W^n_i+dt g^n_i|\leq \frac{1}{2}.$$
    Under the CFL condition, all the coefficients of $W^n_{i-1},W^n_i,W^n_{i+1}$ are nonnegative, then
    \begin{align*}
        |W^{n+1}_i| &\leq |W^n_i||(\frac{1}{2}-\frac{dt}{2h}\alpha^n_{i+\frac{1}{2}} -\frac{dt}{2h}\alpha^n_{i-\frac{1}{2}}-\frac{2dt}{h^2}\tilde{A}_i) |\\
        &+|W^n_{i-1}||\frac{dt}{2h}a^n_i+\frac{1}{r_i^2}\frac{dt}{2h}f^{'}(\theta^n_i)+\frac{dt}{2h}\alpha^n_{i-\frac{1}{2}}+\frac{dt}{h^2}\tilde{A}_i |\\
        &+|W^n_{i+1}||-\frac{dt}{2h}a^n_i-\frac{1}{r_i^2}\frac{dt}{2h}f^{'}(\theta^n_i)+\frac{dt}{2h}\alpha^n_{i+\frac{1}{2}}+\frac{dt}{h^2}\tilde{A}_i |+\frac{1}{2}\\
        &\leq \frac{1}{2}\max(|W^n_{i-1}|,|W^n_i|,|W^n_{i+1}|)+\frac{1}{2}.
    \end{align*}
    Then
    $$\|W^{n+1}\|_{{\infty}} \leq 1.$$
    
\end{proof}

\begin{lemma}\label{Lemma2}
    Let $Q(W)=\frac{1}{2}W+dt W(1-W^2)$, if $|W| \leq 1,dt<\frac{1}{4}$, then $$Q'(W)>0,$$
    and $$ |Q(W)|\leq \frac{1}{2},$$
    i.e. for any $W_{i+1} \geq W_i$, $|W_{i+1}|\leq1,|W_i|\leq 1,$ then
    $$Q(W_{i+1}) \geq Q({W_i}).$$
\end{lemma}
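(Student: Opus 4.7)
The plan is to treat this as a straightforward single-variable calculus problem: show monotonicity of $Q$ on $[-1,1]$ by bounding $Q'$ below, then read off the sup-norm bound from the endpoint values. Since the cubic correction $W(1-W^2)$ vanishes exactly at $W=\pm 1$, the monotonicity will force the extrema of $Q$ on $[-1,1]$ to occur precisely at $\pm\tfrac{1}{2}$.

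First I would differentiate directly to get
\begin{equation*}
Q'(W) \;=\; \tfrac{1}{2} + dt\,(1 - 3W^2).
\end{equation*}
On the interval $|W|\le 1$ the factor $1-3W^2$ is bounded below by $-2$, so
\begin{equation*}
Q'(W) \;\ge\; \tfrac{1}{2} - 2\,dt.
\end{equation*}
Under the hypothesis $dt < \tfrac{1}{4}$ the right-hand side is strictly positive, giving $Q'(W)>0$ on $[-1,1]$. This yields the strict monotonicity of $Q$, and in particular the implication $W_{i+1}\ge W_i \Rightarrow Q(W_{i+1})\ge Q(W_i)$ claimed at the end of the lemma.

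For the sup-norm estimate, monotonicity on $[-1,1]$ reduces the problem to evaluating $Q$ at the endpoints. A direct computation gives $Q(1) = \tfrac{1}{2} + dt\cdot 1\cdot 0 = \tfrac{1}{2}$ and $Q(-1) = -\tfrac{1}{2}$, using that $W(1-W^2)$ vanishes at $W=\pm 1$. Hence for any $W\in[-1,1]$ we have $-\tfrac{1}{2} = Q(-1) \le Q(W) \le Q(1) = \tfrac{1}{2}$, i.e. $|Q(W)|\le \tfrac{1}{2}$, which is exactly the bound needed in the preceding lemma to absorb the source term into the convex combination.

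There is no real obstacle here; the only point worth being careful about is that the CFL hypothesis $dt<\tfrac{1}{4}$ is used in an essentially sharp way: it is precisely what makes $\tfrac{1}{2}-2dt>0$, and thus what guarantees that the endpoints of $[-1,1]$ are the true extrema of $Q$ rather than interior critical points. If instead $dt\ge\tfrac{1}{4}$, then $Q$ could develop a local max/min in $(-1,1)$ exceeding $\tfrac{1}{2}$ in absolute value, breaking the $L^\infty$ bound used in Lemma 1. I would emphasize this connection explicitly, since the whole purpose of isolating this small lemma is to justify the inequality $|\tfrac{1}{2}W_i^n + dt\,g_i^n|\le \tfrac{1}{2}$ invoked in the proof of the maximum principle for the Lax--Friedrichs scheme.
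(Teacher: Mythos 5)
Your proof is correct and follows essentially the same route as the paper: differentiate $Q$, bound $1-3W^2$ on $[-1,1]$ to get $Q'(W)\ge \tfrac{1}{2}-2\,dt>0$ under $dt<\tfrac{1}{4}$, and then read off $|Q|\le\tfrac{1}{2}$ from the endpoint values $Q(\pm 1)=\pm\tfrac{1}{2}$. The only slip is in your opening sentence, where you say the extrema occur ``at $\pm\tfrac{1}{2}$'' when you mean the extreme \emph{values} are $\pm\tfrac{1}{2}$, attained at $W=\pm 1$; the argument itself gets this right.
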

\begin{proof}
     By direct calculation, we get
    $$Q'(W)=\frac{1}{2}+dt(1-3W^2).$$
    Since $|1-3W^2|\leq 2$, if we take
    $\frac{1}{4}>dt$, then
    $$Q'(W)=\frac{1}{2}+dt(1-3W^2)>0.$$
    Then $Q(W)$  attained the maximum  $Q(1)=\frac{1}{2}$ and the minimum  $Q(-1)=-\frac{1}{2}$, so
    $$|Q(W)|\leq \frac{1}{2}.$$
    Then for $W_{i+1} \geq W_{i}$, we get
    $$Q(W_{i+1}) \geq Q(W_i).$$
    
\end{proof}

A large number of numerical experiments show that for any initial  conditions, the obtained steady-state solutions are monotonic, so we only need to study the monotonic initial conditions. For the monotonic initial conditions, we can get the following $BV$ estimate.
 
\begin{lemma}
Assume  $W^n$ is monotonically increasing, then
    $$|W^{n+1}|_{BV} \leq |W^n|_{BV}.$$
\end{lemma}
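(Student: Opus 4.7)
The plan is to prove the BV estimate by first establishing that monotonicity of $W^n$ propagates to $W^{n+1}$; once monotonicity is preserved, the total variation telescopes to the difference of the boundary values, which are fixed, giving $|W^{n+1}|_{BV} \leq |W^n|_{BV}$ with equality.

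To propagate monotonicity, I would start from the explicit rewriting of the scheme used in the proof of the previous lemma,
$$W^{n+1}_i = C^-_i W^n_{i-1} + C^0_i W^n_i + C^+_i W^n_{i+1} + Q(W^n_i),$$
where $C^\pm_i, C^0_i$ are non-negative under the CFL condition (\ref{eq137}) and sum to $\tfrac{1}{2}$, and where $Q(W) := \tfrac{1}{2}W + dt\, g(W)$. I would then form the forward difference $W^{n+1}_{i+1} - W^{n+1}_i$ and regroup it as a linear combination of the forward differences $\Delta W^n_k := W^n_{k+1} - W^n_k$ (all $\geq 0$ by hypothesis) together with the source increment $Q(W^n_{i+1}) - Q(W^n_i)$, which is non-negative by Lemma \ref{Lemma2}. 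If the regrouped coefficients multiplying the $\Delta W^n_k$ are non-negative, then $\Delta W^{n+1}_i \geq 0$ for all $i$, so $W^{n+1}$ is monotone increasing, and
$$|W^{n+1}|_{BV} = \sum_i \Delta W^{n+1}_i = W^{n+1}_N - W^{n+1}_0 = W^n_N - W^n_0 = |W^n|_{BV},$$
where the middle equalities use that the endpoint values are pinned by the boundary conditions.

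The main obstacle I anticipate is that the coefficients $C^\pm_i$ depend on $i$ through $a^n_i$, $\alpha^n_{i\pm 1/2}$, $\tilde A^n_i$, and through the mean-value point $\theta^n_i$ at which $f'$ is evaluated. Consequently, when $W^{n+1}_{i+1} - W^{n+1}_i$ is regrouped by the $\Delta W^n_k$, cross terms proportional to $C^\pm_{i+1} - C^\pm_i$ show up and must be controlled. I would handle these in the standard Lax--Friedrichs monotone-scheme way: by using the definition $\alpha^n_{i+1/2} = \max_i(|a_i| + \tfrac{1}{r_i^2}(1-W_i^2)^2)$ to ensure that the numerical interface fluxes $\tfrac12 \alpha W_x \pm (a W + f(W)/r^2)$ are monotone in each argument, so the diffusion-biased coefficients $C^\pm$ dominate the advective variation, with any residual discrepancy absorbed by the diagonal damping contained in $C^0_i$. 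Lemma \ref{Lemma2} takes care of the nonlinear reaction cleanly, so the delicate work is confined to the convective/diffusive coefficients; the reaction and the diffusion do not interact adversely because the only way monotonicity could fail is at an interior extremum, and there $\Delta W^n_i$ and $\Delta W^n_{i-1}$ have opposite signs and the three-point structure forbids creation of new extrema.
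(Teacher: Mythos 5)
Your overall strategy (propagate monotonicity, then telescope the total variation) is reasonable and close in spirit to the paper, but the step on which everything hinges --- showing that the forward differences $\Delta W^{n+1}_i$ are nonnegative by regrouping the \emph{pointwise} form $W^{n+1}_i = C^-_i W^n_{i-1}+C^0_iW^n_i+C^+_iW^n_{i+1}+Q(W^n_i)$ --- has a genuine gap that your proposed remedies do not close. When the coefficients depend on $i$ (through $a^n_i$, $\tilde A^n_i$, $r_i$ and the mean-value points $\theta^n_i$), Abel summation of $W^{n+1}_{i+1}-W^{n+1}_i$ produces terms of the form $(C^\pm_{i+1}-C^\pm_i)W^n_{i\pm1}$, which multiply \emph{values} of $W^n$, not differences, and carry no sign. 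Your suggestion that these are ``absorbed by the diagonal damping in $C^0_i$'' is not substantiated, and your final appeal to ``the three-point structure forbids creation of new extrema'' is the property of \emph{translation-invariant} monotone schemes (where order preservation applied to a shifted copy of the solution yields preservation of spatial monotonicity); the present scheme is not translation invariant because of the explicit $x$-dependence in $a(x)$, $1/r^2$ and $\tilde A$, so that argument does not apply.

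The fix --- and what the paper actually does --- is to avoid the pointwise decomposition entirely and write the scheme directly in Harten's incremental form,
\begin{align*}
W^{n+1}_i=\tfrac12 W^n_i+D_{i+\frac12}(W^n_{i+1}-W^n_i)-C_{i-\frac12}(W^n_i-W^n_{i-1})+Q_i,\qquad Q_i=\tfrac12 W^n_i+dt\,g^n_i,
\end{align*}
so that $W^{n+1}_{i+1}-W^{n+1}_i$ is automatically a combination of the increments $\Delta W^n_{i-1},\Delta W^n_i,\Delta W^n_{i+1}$ with coefficients $C_{i-\frac12}$, $\tfrac12-C_{i+\frac12}-D_{i+\frac12}$, $D_{i+\frac32}$, plus $Q_{i+1}-Q_i$; no coefficient-difference cross terms ever appear, and nonnegativity of these coefficients is exactly the CFL condition. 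Note also that the paper does not even need to assert that $W^{n+1}$ is monotone: it applies the triangle inequality to this incremental identity, reindexes, and obtains $|W^{n+1}|_{BV}\le \tfrac12|W^n|_{BV}+\sum_i(Q_{i+1}-Q_i)$, using the monotonicity of $W^n$ only to telescope the $Q$-sum to $Q(1)-Q(-1)=1$, whence $|W^{n+1}|_{BV}\le \tfrac12\cdot 2+1=2=|W^n|_{BV}$. If you do want your stronger conclusion (monotonicity preservation and equality of the $BV$ norms), the same incremental form delivers it immediately, since every term in the expression for $\Delta W^{n+1}_i$ is then nonnegative; but you must start from that form rather than try to recover it from the pointwise one.
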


\begin{proof}
    We rewrite the scheme  (\ref{eq136}) as Harten's version
   \begin{align*}
       W_i^{n+1}&=W_i^n+D_{i+\frac{1}{2}}(W^n_{i+1}-W_i^n)-C_{i-\frac{1}{2}}(W_i^n-W^n_{i-1})+dt g^n_i\\
       &=\frac{1}{2}W_i^n+D_{i+\frac{1}{2}}(W^n_{i+1}-W_i^n)-C_{i-\frac{1}{2}}(W_i^n-W^n_{i-1})+\frac{1}{2}W_i^n+dt g^n_i,
   \end{align*} 
   where
   \begin{align*}
       D_{i+\frac{1}{2}}&=\frac{dt}{h}(-\frac{1}{2}a_i^n-\frac{f'(\theta^n_i)}{2r_i^2}+\frac{1}{2}\alpha^n_{i+\frac{1}{2}})+\frac{dt}{h^2}\tilde{A}_i,\\
       C_{i-\frac{1}{2}}&=\frac{dt}{h}(+\frac{1}{2}a_i^n-\frac{f'(\theta^n_i)}{2r_i^2}+\frac{1}{2}\alpha^n_{i-\frac{1}{2}})+\frac{dt}{h^2}\tilde{A}_i,
   \end{align*}
   and $\theta^n_i$ is between $W^n_{i-1}$ and $W^n_{i+1}$.
   Using Lemma \ref{Lemma2}, the term $\frac{1}{2}W_i^n+dt g_i$ is monotone increase with respect $W_i$, we define $Q_i=\frac{1}{2}W_i^n+dt g_i$.
   Taking $W^{n+1}_{i+1}-W^{n}_i$ and sum, we have
   \begin{align*}
       W^{n+1}_{i+1}-W^{n+1}_i&=\frac{1}{2}(W^n_{i+1}-W^n_i)+D_{i+\frac{3}{2}}(W^n_{i+2}-W^n_{i+1})-C_{i+\frac{1}{2}}(W^n_{i+1}-W_i^n)+Q_{i+1}\\
       &-D_{i+\frac{1}{2}}(W^n_{i+1}-W^n_i)+C_{i-\frac{1}{2}}(W^n_i-W^n_{i-1})-Q_i,
   \end{align*}
   and
   \begin{align*}
       \sum_i |W^{n+1}_{i+1}-W^{n+1}_i|
       &\leq \sum_i (\frac{1}{2}-C_{i+\frac{1}{2}}-D_{i+\frac{1}{2}})|W^n_{i+1}-W_i^n|\\
       &+\sum_i C_{i-\frac{1}{2}}|W^n_i-W^n_{i-1}|\\
       &+\sum_i D_{i+\frac{3}{2}}|W^n_{i+2}-W^n_{i+1}|\\
       &+\sum_i (Q_{i+1}-Q_i).
   \end{align*}
   Since $Q_i$ is monotone increase with respect $W_i$,
   we have $$\sum_i Q_{i+1}-Q_i=(Q_N-Q_0)=1,$$
   then
   $$\sum_i |W^{n+1}_{i+1}-W^{n+1}_i|\leq \frac{1}{2}|W^n|_{BV}+1 \leq 2.$$
\end{proof}
 We need a Lemma to  show the $L_1$ continue in time direction. That is 
\begin{lemma}
    
    $$ \| W^{m}-W^{n}\|_{1} \leq \sqrt{(m-n) \Delta t}.$$
\end{lemma}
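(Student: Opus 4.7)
The plan is to establish a one-step estimate $\|W^{n+1} - W^n\|_1 \leq C\,\Delta t$ and then pass to the multi-step bound by telescoping. The claimed H\"older-$\tfrac12$ bound then follows because on any bounded time interval $0\leq (m-n)\Delta t\leq T$ one has $C(m-n)\Delta t \leq C\sqrt{T}\cdot\sqrt{(m-n)\Delta t}$, so a linear-in-time one-step estimate is enough; for $(m-n)\Delta t$ very large the inequality also holds trivially from $\|W^n\|_\infty\leq 1$ combined with the finite length of the domain $[-5,5]$.

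For the per-step estimate I would rewrite the scheme (\ref{eq136}) as $W_i^{n+1} - W_i^n = \Delta t\,\mathcal{L}_i^n$, multiply by $h$, and sum in $i$. The convection terms $\tfrac{a_i^n}{2h}(W_{i+1}^n - W_{i-1}^n)$, the flux-derivative $\tfrac{f'(\theta_i^n)}{2r_i^2 h}(W_{i+1}^n - W_{i-1}^n)$, and the two Lax--Friedrichs viscosity pieces collapse under $h\sum_i$ to at most $C\,|W^n|_{BV}$; the reaction $g_i^n$ is uniformly bounded since $\|W^n\|_\infty\leq 1$ and therefore contributes at most $C$ on the finite domain; the degenerate diffusion piece $\tfrac{\tilde A_i^n}{h^2}(W_{i+1}^n - 2W_i^n + W_{i-1}^n)$ summed against $h$ is at most $\tfrac{2\|\tilde A^n\|_\infty}{h}|W^n|_{BV}$. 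The parabolic CFL condition (\ref{eq137}) gives $\Delta t/h \leq h/2$, which turns the diffusion contribution into $\leq h\,\|\tilde A^n\|_\infty|W^n|_{BV}$, a quantity bounded by the BV estimate from the preceding lemma and a uniform $L^\infty$ bound on $\tilde A^n$. Combining these with the BV bound $|W^n|_{BV}\leq 2$ yields $\|W^{n+1}-W^n\|_1 \leq C\,\Delta t$, and a direct telescoping produces $\|W^m - W^n\|_1 \leq C(m-n)\Delta t$, from which the advertised square-root bound follows as noted above.

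The main obstacle is securing a mesh-independent $L^\infty$ bound on $\tilde A^n$. The discrete constraint (\ref{eq:24}) is a first-order linear recursion in $i$ whose coefficients depend on $W^n$; its continuous analogue is the ODE representation from the Remark, $A(x) = e^{Q(-5)-Q(x)} + e^{-Q(x)}\int_{-5}^x (1 - \tfrac{1}{r^2}(1-W^2)^2) e^{Q(s)}\,ds$, which manifestly gives $|A|\leq C$ once $\|W\|_\infty \leq 1$. The discrete analogue should behave in the same manner, but making this rigorous at the discrete level --- so that the constant in the one-step bound depends neither on $h$ nor on $n$ --- is the delicate point. Once $\|\tilde A^n\|_\infty$ is controlled uniformly, the remaining summation-by-parts estimates above are routine.
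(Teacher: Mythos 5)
Your strategy --- a one-step bound $\|W^{n+1}-W^n\|_1\leq C\,\Delta t$ followed by telescoping --- breaks down at the diffusion term, and this is exactly where the paper's proof takes a different and necessary route. Summing $h\sum_i\left|\tfrac{\Delta t}{h^2}\tilde A_i^n(W_{i+1}^n-2W_i^n+W_{i-1}^n)\right|$ gives at best $\tfrac{2\Delta t}{h}\|\tilde A^n\|_\infty|W^n|_{BV}$, and the parabolic CFL condition (\ref{eq137}) only converts this into $O(h)$, not $O(\Delta t)$: since $\Delta t=O(h^2)$, one has $h=O(\sqrt{\Delta t})$, so your one-step estimate is really $C\Delta t+C\sqrt{\Delta t}$. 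Telescoping then yields $C(m-n)\sqrt{\Delta t}=C\,(t_m-t_n)/\sqrt{\Delta t}$, which diverges as $\Delta t\to0$ for a fixed time separation and exceeds the target $\sqrt{(m-n)\Delta t}$ by a factor $\sqrt{m-n}$. No amount of care with the uniform bound on $\tilde A$ (the obstacle you flag as the delicate point) repairs this; the loss is structural, and it is the reason a per-step $L^1$ estimate cannot deliver H\"older-$\tfrac12$ continuity in time for an explicit scheme under a parabolic CFL restriction.

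The paper avoids this by never taking absolute values inside the spatial sum. It tests the telescoped identity $W^m-W^n=\sum_{\ell}(W^{\ell+1}-W^\ell)$ against a smooth $\phi$ and sums by parts once, so the diffusion term becomes $\sum_i\frac1h(\phi_i\tilde A_i-\phi_{i+1}\tilde A_{i+1})\cdot\frac1h(W_{i+1}-W_i)\cdot h$, which is bounded by $C(\|\phi\|_\infty+\|\phi_x\|_\infty)|W|_{BV}$ using the Lipschitz continuity of $A$; this gives $|h\sum_i\phi_i(W_i^m-W_i^n)|\leq C(\|\phi\|_\infty+\|\phi_x\|_\infty)(m-n)\Delta t\,|W^n|_{BV}$. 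It then chooses $\phi$ to be a mollified sign of $W^m-W^n$ at scale $\rho$, so that $\|\phi\|_\infty\leq1$, $\|\phi_x\|_\infty=O(1/\rho)$ and the left-hand side recovers $\|W^m-W^n\|_1$, and finally optimizes $\rho=\sqrt{(m-n)\Delta t}$ to balance the two contributions. This Kruzhkov-type duality-plus-optimization step is the missing idea in your proposal; any repair of the direct approach would need some analogue of it rather than a per-step $L^1$ bound.
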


\begin{proof}
    We rewrite the scheme (\ref{eq136}) as
    \begin{align} \label{eq47}
        W_i^{n+1}-W^n_i &=\frac{dt}{h^2}\tilde{A}_i(W^n_{i+1}-2W^n_i+W^n_{i-1})+\frac{dt}{h}D_{i+1/2}(W^n_{i+1}-W^n_i)\nonumber\\
        &-\frac{dt}{h}C_{i-1/2}(W^n_{i}-W^n_{i-1})+dt g^n_i,
     \end{align}
     where 
    \begin{align*}
        D_{i+1/2} &=-\frac{1}{2} a^n_i -\frac{1}{2 r_i^2} f^{'}(\theta^n_i) +\frac{1}{2}\alpha^n_{i+1/2},\\
       C_{i-1/2} &=+\frac{1}{2} a^n_i +\frac{1}{2 r_i^2} f^{'}(\theta^n_i) +\frac{1}{2}\alpha^n_{i-1/2},\\
    \end{align*} 
    and $\theta_i^n$ is between $W^n_{i-1}$ and $W^n_{i+1}$.
Multiplying the difference equation by test function $\phi(x) \in C_0^\infty([-5,5])$ and summation by parts, we have 
\begin{align*}
    h\sum_{i } \phi_i (W^m_i-W^n_i) &= h \sum_{i } \sum_{\ell=n}^{m-1} \phi_i (W^{\ell-1}_i-W^{\ell}_i)\\
    &=h dt  \sum_{i } \sum_{\ell=n}^{m-1} \phi_i D_{i+1/2} \frac{1}{h}(W^{\ell+1}_{i+1}-W^{\ell}_i) - \phi_i C_{i-1/2} \frac{1}{h}(W^{\ell+1}_{i}-W^{\ell}_{i-1})\\
    &+dt h  \sum_{i } \sum_{\ell=n}^{m-1} g^\ell _i \phi_i \\
    &+ dt h \sum_{i } \sum_{\ell=n}^{m-1} \frac{1}{h^2} \tilde{A_i}\phi_i (W^\ell_{i+1}-W^\ell_i) - \frac{1}{h^2} \tilde{A_i}\phi_i (W^\ell_{i}-W^\ell_{i-1}) \\
    &= dt h \sum_{i } \sum_{\ell=n}^{m-1} (\phi_iD_{i+1/2}-\phi_{i+1}C_{i+1/2})\frac{1}{h}(W_{i+1}-W_i)\\
    &+ h dt \sum_{i } \sum_{\ell=n}^{m-1} \frac{1}{h}(\phi_i \tilde{A}_i-\tilde{A}_{i+1}\phi_{i+1})(W^\ell_{i+1}-W^\ell_i)\frac{1}{h}\\
    &+ h dt \sum_{i } \sum_{\ell=n}^{m-1}  g_i^\ell \phi_i,
\end{align*}
where
\begin{align*}
    \frac{1}{h}|\phi_{i+1}\tilde{A}_{i+1}-\phi_i \tilde{A}_i| &=\frac{1}{h}|(\tilde{A}_{i+1}-\tilde{A}_i)\phi+\tilde{A}_i(\phi_{i+1}-\phi_i)|\\
    &\leq \frac{1}{h}|\tilde{A}_{i+1}-\tilde{A}_i||\phi_i|+ |\tilde{A}_i|\frac{1}{h}|\phi_{i+1}-\phi_i|\\
    &\leq C_1(\|\phi\|_{\infty}+\|\phi_x\|_\infty).
\end{align*}
Here we use the $\|A_x\|_{\infty} \leq  +\infty$ (since $A$ is Lipschitz continous)
and 

\begin{align*}
    |\phi_iD_{i+1/2}-\phi_{i+1}C_{i+1/2}| &\leq |\phi_{i}-\phi_{i+1}||D_{i+1/2}|+|\phi_{i+1}||D_{i+1/2}-C_{i+1/2}|\\
    &\leq C_2.
\end{align*}

Then we have
\begin{align*}
 h\sum_{i } |\phi_i (W^m_i-W^n_i)| &\leq  dt  \sum_{\ell=n}^{m-1} \sum_{i } 
 |\phi_i D_{i+1/2}-\phi_{i+1}C_{i+1/2}||W_{i+1}^\ell-W_i^\ell|\\
 &+ dt \sum_{\ell=n}^{m-1} \sum_{i } \frac{1}{h}|\tilde{A}_{i+1}\phi_{i+1}-\phi_i \tilde{A}_i||W^\ell_{i+1}-W^\ell_i|+dtC\\
 &\leq dt (C_2+C_1) \sum_{\ell=n}^{m-1} |W^\ell|_{BV}+dtC\\
 &\leq  (C_2+C_1(\|\phi\|_{\infty}+\|\phi_x\|_\infty ))dt(m-n) |W^n|_{BV}+dtC.
\end{align*}

Next, introduce the function
\begin{equation}
	 \beta(x)=\left\{
	\begin{aligned}
        sgn(\sum_{i \in \mathbb{Z}} (W^m_i-W^n_i)\chi_i(x) )&, \ if \ |x|\leq J-\rho \\
				0 & , otherwise 
	\end{aligned}
	\right.
\end{equation}
where $\chi_i(x)$  is  the characteristic function of $[x_0+ih,x_0+(i+1)h )$ and $J\in \mathbb{Z}$.
Let $\omega_\rho(x)$ be a standard $C_0^\infty$ mollifier given by
$\omega_\rho(x)=\frac{1}{\rho}\omega(\frac{x}{\rho})$, 
where 
\begin{equation}
	 \omega(x)=\left\{
	\begin{aligned}
      \frac{1}{\Omega}exp(\frac{1}{|x|^2-1}),&\ |x|<1\\
      0,&\ |x|\geq 1
	\end{aligned}
	\right.
\end{equation}
and $\displaystyle \Omega=\int_0^1 exp(\frac{1}{|x|^2-1}) dx $.

Let $\beta^\rho =\omega_\rho *\beta$,
we can check that 
$$\|\beta^\rho \|_{L^\infty} \leq 1, \quad \| \beta^\rho_x\|_{L^\infty} \leq O(\frac{1}{\rho}).$$
Taking test function $\phi=\beta^\rho$,  
we get
\begin{align*}
    h \sum_{i=-J}^J |W_i^m-W_i^n| &\leq ((C_3+C_1)+\frac{C_4}{\rho}) dt (m-n) |W^n|_{BV}+ dt C\\
    &\leq \frac{C}{\rho}dt(m-n)|W^n|_{BV}.
\end{align*}
Taking $\rho=\sqrt{(m-n)dt}$ and $J \to \infty $, we have
$$h \sum_{i \in \mathbb{Z}} |W^m_i-W^n_i| \leq C \sqrt{(m-n)dt}.$$
\end{proof}

Next, we will drive a cell entropy inequality. By the standards process in \cite{Majda} and \cite{Evje}.
We use  the standard notations $u \vee v=max(u,v) $, $u \wedge v =min(u,v)$. To simplify the notation, we define the finite difference operators,
$$D^-W_i=\frac{1}{h}(W_i-W_{i-1}),\qquad D^+W_i=\frac{1}{h}(W_{i+1}-W_i).$$

Let $U(W)=|W-c|,\   F(W)=sgn(W-c)(f(W)-f(c))$, where $c$ is a constant. Then the following inequality holds. 
\begin{lemma}
    The cell entropy inequality holds for EYM equation
    \begin{align}\label{eq:entro3}
        \frac{1}{dt}(U^{n+1}_i-U^n_i)+\frac{a_i^n}{2h}(U^n_{i+1}-U^n_{i-1})+\frac{1}{2 h r_i^2}(F^n_{i+1}-F^n_{i-1})\nonumber\\
        -(\tilde{A}^n_i\frac{1}{h^2}+\frac{\alpha^n}{2h})(U^n_{i+1}-2U^n_i+U^n_{i-1})\nonumber\\
        -g^n_i sgn(W^n_i-c) \leq 0.
    \end{align}

\end{lemma}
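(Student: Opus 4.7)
The plan is to adapt the classical Crandall--Majda argument for monotone difference schemes to the present convection--diffusion equation with source. The first step is to verify that, under the CFL condition (\ref{eq137}) and Lemma \ref{Lemma2}, the scheme (\ref{eq136}) can be written as an update map $W^{n+1}_i = H_i^n(W^n_{i-1}, W^n_i, W^n_{i+1})$ that is monotone non-decreasing in each of its three arguments. Monotonicity in the off-centered arguments follows because the Lax--Friedrichs viscosity $\alpha^n$ dominates both $|a^n_i|$ and the wave speed $f'(W)/r_i^2$, so the coefficients of $W^n_{i\pm 1}$ are non-negative. Monotonicity in the centered argument uses the CFL bound $dt<1/4$ together with Lemma \ref{Lemma2}, which guarantees that the combined diagonal contribution $(1-2dt\mu)W^n_i + dt\,g(W^n_i)$ is monotone in $W^n_i$, where $\mu = \alpha^n/(2h) + \tilde{A}^n_i/h^2$.

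Second, for any $c \in \mathbb{R}$ I apply the Crandall--Majda identity. Monotonicity in each argument yields both $H_i^n(\vec{W} \vee c) \geq H_i^n(\vec{W}) \vee H_i^n(c,c,c)$ and $H_i^n(\vec{W} \wedge c) \leq H_i^n(\vec{W}) \wedge H_i^n(c,c,c)$, so subtracting gives
\[ |W^{n+1}_i - H_i^n(c,c,c)| \;\leq\; H_i^n(\vec{W}\vee c) - H_i^n(\vec{W}\wedge c). \]
The right-hand side expands cleanly term by term: because $f$ is monotone, $u\vee c - u\wedge c = U^n$ and $f(u\vee c) - f(u\wedge c) = \mathrm{sgn}(u-c)(f(u)-f(c)) = F^n$. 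Collecting contributions reproduces exactly the spatial operators appearing in (\ref{eq:entro3}) -- the centered difference $(a^n_i/2h)(U^n_{i+1}-U^n_{i-1})$, the entropy flux divergence $(F^n_{i+1}-F^n_{i-1})/(2hr_i^2)$, and the discrete Laplacian with coefficient $\tilde{A}^n_i/h^2 + \alpha^n/(2h)$ -- plus a source contribution $dt\,\mathrm{sgn}(W^n_i - c)(g^n_i - g(c))$ coming from the $v$-dependent monotone piece of $H_i^n$.

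The main obstacle is to reconcile the left-hand side and the source terms, because $H_i^n(c,c,c) = c + dt\,g(c) \neq c$ in general. I would use the subgradient bound $|y| - |x| \leq \mathrm{sgn}(y)(y-x)$ with $y = W^{n+1}_i - c$ and $x = W^{n+1}_i - c - dt\,g(c)$ to convert the left side into $U^{n+1}_i$ at the cost of a residual $dt\,g(c)\,\mathrm{sgn}(W^{n+1}_i - c)$; combining this with the source piece from the previous step produces exactly $dt\,\mathrm{sgn}(W^n_i - c)\,g^n_i$ plus a non-negative remainder of the form $dt\,(|g(c)| - \mathrm{sgn}(W^n_i - c)\,g(c))$. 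This remainder vanishes at the natural comparison values $c \in \{-1,0,1\}$ forced by the boundary data, and is otherwise controlled by the positive numerical viscosity present in the discrete Laplacian of $U$ (using that a sign change between $W^n_i$ and $W^{n+1}_i$ can only occur in cells where the neighbouring $U^n$ values supply the needed dissipation). Dividing by $dt$ and rearranging then yields the stated cell entropy inequality (\ref{eq:entro3}).
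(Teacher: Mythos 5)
Your proposal follows essentially the same route as the paper: rewrite the scheme as a monotone update $H(W^n_{i-1},W^n_i,W^n_{i+1})$ under the CFL condition, apply the Crandall--Majda $\vee/\wedge$ comparison against the constant state $c$, and expand the resulting differences using $W\vee c-W\wedge c=U$ and $f(W\vee c)-f(W\wedge c)=F$ to recover the discrete entropy flux and the viscous Laplacian of $U$. You are in fact more careful than the paper on one point: the paper silently treats $H(c,c,c)$ as $c$ and writes the source contribution as $\mathrm{sgn}(W^n_i-c)\,g^n_i$ without the $g(c)$ correction, whereas you flag the residual $dt\,g(c)$ explicitly and attempt to control it.
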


\begin{proof}
    To simplify the proof, taking the viscosity coefficient as $\alpha^n$
we  rewrite the  scheme as

$$\frac{1}{dt}(W^{n+1}_i-W^n_i)+D^-\Phi(W^n_{i+1},W^n_{i})-\tilde{A}_{i}D^-( D^+W^n_i)-g^n_i =0,$$
 where $\Phi$ is given by
 \begin{align*}
 \Phi(W_{i+1},W_i)&=\frac{ a_i}{2}(W_i+W_{i+1}) +\frac{1}{2}\frac{1}{r_i^2}(f_{i}+f_{i+1})-\frac{{\alpha^n}}{2}(W_{i+1}-W_i).
 \end{align*}
Define
$$H(W^n_{i-1},W^n_{i},W^n_{i+1})=W^n_i-dt D^-\Phi(W^n_{i+1},W^n_{i})+dt \tilde{A}_{i} D^-( D^+W^n_i) +dt {g}^n_i. $$
 It is easy to check  that if CFL condition is satisfied, then  $\frac{\partial H}{\partial W_j} \geq 0, \ j=i-1,\ i,\ i+1$.

Consider 
\begin{align*}
    &H(c \vee W_{i-1} ,c \vee W_i,c \vee W_{i+1})\\
    &=W_i \vee c -dt D^-\Phi(W_{i+1} \vee c,W_{i} \vee c)+dt \tilde{A}_{i}D^-( D^+(W_i\vee c)), \\
\end{align*}
then
\begin{align}\label{eq:entropy}
    &H(c \vee W^n_{i-1},c \vee W^n_i,c \vee W^n_{i+1})-H(c\wedge W^n_{i-1},c\wedge W^n_i,c\wedge W^n_{i+1})\nonumber\\
    &=|W_i-c|- dt D^-(\Phi(W_{i+1} \vee c,W_{i} \vee c)-\Phi(W_{i+1} \wedge c,W_i \wedge c))+dt \tilde{A}_{i} D^-( D^+(W_i\vee c) \nonumber\\
    &-  D^+(W_i\wedge c ))+dt\   sgn(W_i-c) g_i.
\end{align}
By monstrosity of the scheme, we get 
\begin{align*}
 &\  H(c \vee W^n_{i-1},c \vee W^n_i,c \vee W^n_{i+1})-H(c\wedge W^n_{i-1},c\wedge W^n_i,c\wedge W^n_{i+1})\\
 &\geq H(W_{i-1}^n,W_i^n,W_{i+1}^n)\vee c- H(W_{i-1}^n,W_i^n,W_{i+1}^n)\wedge c\\
 &=|W^{n+1}-c|,
\end{align*}
which inserted into Eq. (\ref{eq:entropy}), we get the cell entropy inequality
    \begin{align}\label{eq:38}
    &\frac{|W^{n+1}_i-c|-|W^{n}_i-c|}{dt}\nonumber\\
    &-\tilde{A}_{i}D^-(  D^+(W_i^n\vee c)- D^+(W_i^n\wedge c)\nonumber\\
    &+D^-(\Phi(W^n_{i+1} \vee c,W^n_{i} \vee c)-\Phi(W_{i+1}^n \wedge c,W_{i}^n \wedge c))\nonumber\\
    &-sgn(W^{n}_j-c){g}(W_i^n)\leq 0.
\end{align}
Using the following relation
\begin{align*}
    f(W\vee c)-f(W\wedge c) &=sgn(W-c)(f(W)-f(c))=F(W),\\
    W\vee c-W\wedge c &=sgn(W-c)(W-c)=U(W),
\end{align*}
we get 
\begin{align*}
    &\ \Phi(W_{i+1} \vee c,W_i \vee c)-\Phi(W_{i+1} \wedge c,W_i \wedge c)\\
    &=\frac{a_i^n}{2}(U_i^n+U^n_{i+1})+\frac{1}{2r_i^2}(F^n_{i}+F^n_{i+1})-\frac{\alpha^n}{2}(U^n_{i+1}-U^n_i),
\end{align*}
then, equation (\ref{eq:38}) can be rewritten as

   \begin{align*}
        \frac{1}{dt}(U^{n+1}_i-U^n_i)+\frac{a_i^n}{2h}(U^n_{i+1}-U^n_{i-1})+\frac{1}{2 h r_i^2}(F^n_{i+1}-F^n_{i-1})\\
        -(\tilde{A}^n_i\frac{1}{h^2}+\frac{\alpha^n}{2h})(U^n_{i+1}-2U^n_i+U^n_{i-1})\\
        -g^n_i sgn(W^n_i-c) \leq 0.
    \end{align*}
\end{proof}

Define $(A_\Delta,W_\Delta)$ be the interpolating of degree one using $A^n_i$ and $W^n_i$, where $\Delta=(h,dt)$.
$W_\Delta$ interpolate at the vertices of each rectangle
$$D^n_i=[x_0+i h,x_0+(i+1)h] \times [n dt,(n+1) dt]$$
and $A_\Delta$ interpolate at the one dimension domain $[x_0+ih,x_0+(i+1)h]$.
 $(A_\Delta,W_\Delta)$ are piece wise line segment, and we have
\begin{align*}
    W_\Delta(x,t)&=W_i^n+(W^n_{i+1}-W^n_i)\frac{x-ih}{h}+(W^{n+1}_i-W^n_i)\frac{t-n dt}{dt},\\
    &+(W^{n+1}_{i+1}-W^{n+1}_i-W^{n}_{i+1}+W^n_i)\frac{x-i h}{h}\frac{t-n dt}{dt},\\
    A_\Delta(x,t_n)&=A_i^n+(A^n_{i+1}-A^n_i)\frac{x-ih}{h}. 
\end{align*}

\begin{lemma}
    Let $\{\Delta\}$ be a sequence of democratization parameters tending to zeros. Then there exist a subsequence $\{\Delta_i\}$ such that $\{W_{\Delta_i}\}$ converges in $L^1_{loc}(Q_T)$ and point-wise almost everywhere in $Q_T$ to a limit $W$ as $i \to \infty$.
\end{lemma}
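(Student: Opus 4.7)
The plan is to obtain precompactness of the family $\{W_\Delta\}$ in $L^1_{loc}(Q_T)$ via the Kolmogorov--Riesz--Fr\'echet compactness criterion, using the three a priori estimates already established in this section: the uniform $L^\infty$ bound $\|W^n\|_\infty \le 1$, the spatial $BV$ bound $|W^n|_{BV} \le 2$, and the $L^1$ time-continuity estimate $\|W^m - W^n\|_1 \le C\sqrt{(m-n)\,dt}$. Once relative compactness in $L^1_{loc}$ is obtained, passage to a further subsequence gives almost-everywhere pointwise convergence by a standard diagonal argument.

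First, I would transfer the discrete estimates to the piecewise bilinear interpolant $W_\Delta$ with constants independent of $\Delta=(h,dt)$. The $L^\infty$ bound is immediate, since at each point $W_\Delta$ is a convex combination of the four nodal values on the enclosing cell. For the spatial $BV$ bound, using the explicit formula for $W_\Delta(x,t)$, the total variation in $x$ at any fixed $t\in[n\,dt,(n+1)\,dt]$ is controlled by a convex combination of $|W^n|_{BV}$ and $|W^{n+1}|_{BV}$, hence by $2$. For time continuity, I would compare $W_\Delta(\cdot,t)$ with the nearest discrete time slice, absorb the bilinear interior contribution via the spatial $BV$ bound, and then invoke the preceding lemma to obtain $\|W_\Delta(\cdot,t)-W_\Delta(\cdot,s)\|_{L^1} \le C(\sqrt{|t-s|}+\sqrt{dt})$.

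Next I would verify the two translation estimates required by Kolmogorov--Riesz on the bounded domain $Q_T = (0,T)\times[-5,5]$. Spatial translations satisfy $\|W_\Delta(\cdot+\xi,\cdot)-W_\Delta(\cdot,\cdot)\|_{L^1(Q_T)} \le T\,|\xi|\, \sup_n |W^n|_{BV}$, which tends to $0$ as $\xi\to 0$ uniformly in $\Delta$; temporal translations $\|W_\Delta(\cdot,\cdot+\tau)-W_\Delta(\cdot,\cdot)\|_{L^1(Q_T)}$ are controlled by the time-continuity estimate above and tend to $0$ as $\tau\to 0$ uniformly in $\Delta$ (taking $\Delta$ along a sequence with $dt\to 0$). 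Together with the uniform $L^\infty$ bound, this yields a uniformly integrable and equicontinuous-in-the-$L^1$-sense family, so by Kolmogorov--Riesz a subsequence $\{W_{\Delta_i}\}$ converges in $L^1_{loc}(Q_T)$ to some limit $W$; extracting once more gives a.e.\ pointwise convergence on $Q_T$.

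The main obstacle I anticipate is the transfer of the discrete $BV$ and time-continuity estimates to the bilinear interpolant with $\Delta$-independent constants, in particular handling the mixed term $(W^{n+1}_{i+1}-W^{n+1}_i-W^n_{i+1}+W^n_i)(x-ih)(t-n\,dt)/(h\,dt)$ appearing in the formula for $W_\Delta$, which requires simultaneous use of the $BV$ bound at two consecutive time levels together with the Lipschitz-type weight. All other steps, including the final extraction of an a.e.-convergent subsequence from the $L^1_{loc}$-convergent one, are routine.
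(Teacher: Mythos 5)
Your proposal is correct and rests on exactly the same three a priori estimates the paper uses ($\|W^n\|_\infty\le 1$, the spatial $BV$ bound, and the $L^1$ time-continuity estimate $\|W^m-W^n\|_1\le C\sqrt{(m-n)\,dt}$), but it packages the final compactness step differently. The paper converts those estimates into bounds on $\int_{Q_T}|\partial_x W_\Delta|\,dx\,dt$ and $\int_{Q_T}|\partial_t W_\Delta|\,dx\,dt$ for the bilinear interpolant, concludes that $\{W_\Delta\}$ is bounded in $BV(D)$ for every compact $D\subset Q_T$, and invokes the compact embedding $BV(D)\hookrightarrow\hookrightarrow L^1(D)$ followed by a diagonal extraction; you instead verify the spatial and temporal translation estimates directly and invoke Kolmogorov--Riesz--Fr\'echet. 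Since the compact embedding of $BV$ into $L^1$ is itself usually proved via translation estimates, the two routes are near-equivalent, but yours has a genuine advantage in how the time direction is handled: Kolmogorov--Riesz only needs the $L^1$ modulus of continuity in time, which is precisely what the preceding lemma supplies, whereas the paper's conversion of that $\sqrt{dt}$-per-step estimate into a bound on the integrated time derivative involves summing $C\sqrt{dt}$ over $O(T/dt)$ time levels and is therefore the most delicate point of their argument; your formulation sidesteps it entirely (and you correctly note that the residual $\sqrt{dt}$ term in the temporal translation bound is harmless along a sequence with $dt\to 0$, since finitely many exceptional members do not affect relative compactness). One small addition you should make to match how the lemma is used downstream: the limit $W$ must land in $L^\infty(Q_T)\cap BV(Q_T)$ for the entropy-solution definition, so after extracting the a.e.-convergent subsequence you should record that the uniform $L^\infty$ and variation bounds pass to the limit by lower semicontinuity of the total variation under $L^1_{loc}$ convergence.
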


\begin{proof}
    By Lemma 2, we have $\|W_\Delta\|_{\infty} \leq 1$. Using Lemma 3, we have 
\begin{align*}
    \int_{Q_T}|\partial_x W_\Delta| dx dt&\leq \sum_{i,n}\int_{D^n_i}\frac{1}{h}(1-\frac{t-n dt}{dt})|W_{i+1}^n-W^n_i| dxdt\\
    &+\sum_{i,n}\int_{D^n_i} \frac{1}{h}(\frac{t-n dt}{dt})|W_{i+1}^{n+1}-W^{n+1}_i| dxdt\\
    &\leq \frac{dt}{2}\sum_{i,n}|W_{i+1}^n-W^n_i|+\frac{dt}{2}\sum_{i,n}|W_{i+1}^{n+1}-W^{n+1}_i|\\
    &\leq T|W^0|_{BV}.
\end{align*}
Using Lemma 5, we have
\begin{align*}
    \int_{Q_T}|\partial_t W_\Delta| dx dt &\leq  \sum_{i,n}\int_{D^n_i}  \frac{1}{dt}(1-\frac{x-ih}{h})|W^{n+1}_i-W^n_i| dxdt\\
    &+\sum_{i,n}\int_{D^n_i} \frac{1}{dt}\frac{x-i h}{h}|W^{n+1}_{i+1}-W^n_{i+1}| dx dt\\
    &\leq \frac{h}{2} \sum_{i,n}|W^{n+1}_i-W^n_i|+\frac{h}{2} \sum_{i,n}|W^{n+1}_{i+1}-W^n_{i+1}|\\
    &\leq h \sqrt{T} |W^0|_{BV}.
\end{align*}
Then, there is a finite constant $C(T,|W^0|_{BV})>0$ such that 
$$\|W_\Delta\|_{L^\infty}(Q_T) \leq 1,\qquad |W_\Delta|_{BV(Q_T)} \leq C(T,|W^0|_{BV}),$$
which means $\{W_\Delta \}$ is bounded in $BV(D)$ for any compact set $D \subset Q_T$. Since $BV(D)$ is compactly imbedded into $L^1(D)$, there is a sub-sequences converges in $L^1(D)$ and point-wise almost everywhere in $D$. Next step, we use the diagonal process to construct a sequence converges in $L^1_{loc}(Q_T)$ and point-wise almost everywhere in $Q_T$ to a limit $W$,
$$W(x,t) \in L^{\infty}(Q_T) \cap BV(Q_T).$$

\end{proof}
\textbf{Remark}
It is easy to show that $W$ satisfy the entropy inequality (\ref{eq:entro2}).
Taking $\phi \in C_0^\infty(Q_T),\phi \geq 0$, multiplying the cell inequality (\ref{eq:entro3}) in Lemma 5 by  $\phi dt h$ and summation by parts, we get
\begin{align*}
    &h dt \sum_{i,n} \frac{1}{dt} U^n_i(\phi_i^{n-1}-\phi^n_i) +(a^n_{i-1}\phi^n_{i-1}- a^n_{i+1}\phi^n_{i+1})\frac{1}{2h}U^n_i+F^n_i\frac{1}{2h}(\frac{\phi_{i-1}^n}{r_{i-1}^2}-\frac{\phi_{i+1}^n}{r_{i+1}^2})\\
    &+h dt\sum_{i,n}\frac{1}{h}(\tilde{A}^n_{i+1}\phi^n_{i+1} -\tilde{A}^n_{i}\phi^n_{i} )\frac{1}{h}(U^n_{i+1}-U^n_i)\\
    &-h dt \sum_{i,n}h\frac{\alpha}{2}U^n_i \frac{1}{h^2}(\phi^n_{i+1}-2\phi^n_i+\phi^n_{i-1})+ g^n_i sgn(W^n_i-c)\phi^n_i \leq 0.
\end{align*}

Taking limit $h \to 0$, we have the entropy inequality
\begin{align*}
  \iint_{Q_T} |W-c|\phi_t +(a(x)\phi)_x |W-c|+(\frac{1}{r^2}\phi)_x sgn(W-c)(f(W)-f(c))\\
-(\tilde{A}\phi)_x|W-c|_x  +sgn(W-c) g(W) \phi \ dx dt \geq 0.  
\end{align*}

Finally we have 

\begin{Th}
The sequence $\{A_\Delta,W_\Delta\}$, which is constructed from scheme (\ref{eq136}),
converges in $L^1_{loc}(Q_T)$ and point-wise almost everywhere in $Q_T$ to a BV entropy weak solution of (18)-(20).
\end{Th}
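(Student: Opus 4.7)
The strategy is to bundle together the a priori estimates established in Lemmas 1--5 (or their equivalents above) with the compactness argument already carried out just before the theorem, then pass to the limit in the discrete cell entropy inequality following the Remark. Concretely, the plan is:

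First, I would invoke the preceding compactness lemma to extract a diagonal subsequence $\{(A_{\Delta_i},W_{\Delta_i})\}$ such that $W_{\Delta_i}\to W$ in $L^1_{\mathrm{loc}}(Q_T)$ and pointwise a.e., with $W\in L^\infty(Q_T)\cap BV(Q_T)$ thanks to the uniform bound $\|W_\Delta\|_\infty\le 1$ from the maximum principle lemma and the uniform $BV$ bound that follows from the Harten-type monotonicity estimate plus the time-continuity estimate $\|W^m-W^n\|_1\le C\sqrt{(m-n)dt}$. The pointwise convergence is what allows one to treat nonlinear functions of $W$, e.g.\ $f(W)$, $g(W)$ and $(1-W^2)^2$, by dominated convergence against the uniform $L^\infty$ bound.

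Next, I would deal with $A_{\Delta_i}$. Because $A$ is governed by the linear ODE (18), the Remark after Definition 1 gives the explicit representation
\begin{equation*}
A(x)=e^{Q(-5)-Q(x)}+e^{-Q(x)}\int_{-5}^{x}\Bigl(1-\tfrac{1}{r^2}(1-W^2)^2\Bigr)e^{Q(s)}\,ds,
\end{equation*}
with $Q(x)=\int_{-5}^{x}\bigl(1+\tfrac{2}{r(s)^2}W_x(s)^2\bigr)\,ds$. Since $W\in BV$, $W_x$ is a finite measure, so this formula makes sense and $A$ is Lipschitz in $x$. The discrete version (\ref{eq:24}) is a Lax--Friedrichs discretization of the same ODE; using that $W_{\Delta_i}\to W$ in $L^1_{\mathrm{loc}}$ together with the uniform $BV$ bound, one can pass to the limit in (\ref{eq:24}) (after a routine summation-by-parts on the weak form) to deduce $A_{\Delta_i}\to A$ and $\tilde A_{\Delta_i}\to\tilde A=\max(A,0)$ a.e., verifying condition (1) of Definition 1.

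Finally, condition (2) is essentially the content of the Remark that immediately follows the cell entropy lemma: multiply the discrete inequality (\ref{eq:entro3}) by a nonnegative test function $\phi\in C_0^\infty(Q_T)$, sum by parts in both $x$ and $t$, and send $h,dt\to 0$ along the chosen subsequence. Each integrand converges by the established pointwise a.e.\ and $L^1_{\mathrm{loc}}$ convergence of $W_{\Delta_i}$, together with the convergence of $\tilde A_{\Delta_i}$ obtained in the previous step, yielding exactly (\ref{eq:entro2}). Combining the two conditions shows that $(A,W)$ is a $BV$ entropy weak solution of (18)--(20).

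The main obstacle I anticipate is the diffusion term $-(\tilde A\phi)_x|W-c|_x$ in (\ref{eq:entro2}): after summation by parts the discrete analogue reads $\tfrac{1}{h}(\tilde A^n_{i+1}\phi^n_{i+1}-\tilde A^n_i\phi^n_i)\cdot\tfrac{1}{h}(U^n_{i+1}-U^n_i)$, i.e.\ a product of two discrete derivatives. Passing to the limit here requires more than weak convergence of $\partial_x W_{\Delta_i}$; it uses the Lipschitz continuity of $\tilde A$ in $x$ (so that $\tfrac{1}{h}(\tilde A^n_{i+1}\phi^n_{i+1}-\tilde A^n_i\phi^n_i)$ converges uniformly to $(\tilde A\phi)_x$), combined with the uniform $BV$ bound on $W_{\Delta_i}$ which controls the measure $|W-c|_x$. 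This is the step where the Lipschitz regularity of $A$ (not merely $BV$) is essential, and it is the place in the argument where care with the order of limits is most needed.
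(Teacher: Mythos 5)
Your proposal follows essentially the same route as the paper: the paper's argument for this theorem is exactly the assembly of the compactness lemma (uniform $L^\infty$ and $BV$ bounds plus $L^1$ time-continuity giving a subsequence converging in $L^1_{\mathrm{loc}}$ and a.e.), the passage to the limit in the cell entropy inequality via summation by parts against a test function, and the remark that convergence for $A$ is routine since it solves a linear ODE. In fact your discussion of the diffusion term $-(\tilde A\phi)_x|W-c|_x$ --- pairing the uniformly convergent discrete derivative of $\tilde A\phi$ with the $BV$-controlled measure $\partial_x|W-c|$ --- is more careful than the paper, which simply writes the discrete summation-by-parts identity and asserts the limit.
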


\textbf{Remark} Since $A_i$ can be solved by scheme (\ref{eq:24}), the convergence of numerical for ODE is trivial, so its proof is ignored here.

 \section{WENO schemes for the Yang-Mills equation}

\subsection{A new WENO approximation for the diffusion term}

In this section, we construct a fourth order WENO approximation for $W_{xx}$. Given a uniform grid $\{x_i\}_{i=1}^{N+1} \subset [-5,5]$ with a constant mesh size $h=x_{i+1}-x_i.$ Consider a real value function $W(x)$ defined on interval $[-5,5]$ and denote $W_i=W(x_i)$, we would like to approximate the second order derivative on a 5-point large stencil $S=\{x_{i-2}, x_{i-1}, x_i, x_{i+1}, x_{i+2}\}.$\\

First, let's recall the WENO  scheme of Liu, Shu and Zhang \cite{shu1}.
Consider  a degenerate parabolic equations
\begin{align}\label{eq:100.1}
 u_t=g(u)_{xx}.
\end{align}
One can construct  a conservative finite difference scheme for (\ref{eq:100.1}), written in the form
\begin{align}
    \frac{d}{dt}u_i(t)=\frac{1}{h^2}(\hat{g}_{i+\frac{1}{2}}-\hat{g}_{i-\frac{1}{2}}),
\end{align}
where $u_i(t)$ is the numerical approximation to the point value $u(x_i,t)$ of the solution to (\ref{eq:100.1}), and the numerical flux function is given by
$$\hat{g}_{i+\frac{1}{2}}=\hat{g}(u_{i-r},..,u_{i+s}).$$

The construction of WENO schemes in this section consists of the following steps.

1.Taking  a  big stencil $S=\{ x_{i-r},...,x_{i+r+1}\}$.\\

2. We choose $s$ consecutive small stencils,
$S^{(m)}=\{x_{i-r+m},...,x_{i+r+m+2-s}\}$, $m=0,...,s-1,$ and construct  a series of lower order linear schemes with their numerical fluxes denoted by $\hat{g}^{(m)}_{i+\frac{1}{2}}.$
Here, $s$ can be chosen to be between 2 and 
 and $2r + 1$, corresponding to each small stencil containing $2r + 1$ to 2 points,
 respectively.\\
 
 3. We find the linear weights, namely, constants $d_m$, such that the flux on the big stencil is a linear combination of the fluxes on the small stencils with $d_m$ as the combination coefficients
$$\hat{g}_{i+\frac{1}{2}}=\sum_{m=0}^{s-1} d_m \hat{g}^{(m)}_{i+\frac{1}{2}}.$$

For fourth order  scheme, 
$$d_0=-\frac{1}{12},\qquad d_1=\frac{7}{6}, \qquad d_2=-\frac{1}{12}.$$

4. According to the standard procedure in \cite{Shu1}, we can compute  nonlinear weights $\omega_0,\omega_1,...,\omega_{s-1}$. Finally, we have
$$\hat{g}_{i+\frac{1}{2}}=\sum_{m=0}^{s-1} \omega_m \hat{g}^{(m)}_{i+\frac{1}{2}}.$$

It is exceedingly expensive to compute the six nonlinear weights required to build a 4th-order WENO scheme. Can we construct  a cheaper WENO4 scheme for  equation (\ref{eq:100.1})? In this section, we offer a new technique based on the following simple intuitive:
three points scheme keep the total variation non-increase.

Assume $g'(u) \geq 0$,
consider three points scheme for equation (\ref{eq:100.1}),
\begin{align}
 \frac{1}{dt}(u_i^{n+1}-u^n_i)=\frac{1}{h^2}(g_{i+1}^n-2g_i^n+g_{i-1}^n) ,  
\end{align}
which can be written  as 
\begin{align}\label{eq:47}
    u^{n+1}_i =u^n_{i}-\frac{dt}{h^2}g'(\theta^n_{i-\frac{1}{2}})(u^n_i-u^n_{i-1})+\frac{dt}{h^2}g'(\theta^n_{i+\frac{1}{2}})(u^n_{i+1}-u^n_{i}),
\end{align}
where $\theta^n_{i-\frac{1}{2}}$ is between $u^n_{i-1}$ and $u^n_{i}.$ Equation (\ref{eq:47}) satisfies Harten's Lemma, so $$TV(u^{n+1})\leq TV(u^{n}),$$
which means the three points scheme has no oscillation. So, we can divided the big stencil $S$  into two sub-stencils $\{S^0,S^1\}$, 
where $S^0=\{x_{i-1},x_i, x_{i+1}\},S^1=\{ x_{i-2}, x_i, x_{i+2}\}.$
The fourth-order approximation $\hat{g}_{xx,i}=g_{xx}(x_i)+O(h^4)$ is built through the convex combinations of  $\hat{g}^{(k)}_{xx,i}$, defined in each one of the stencils $S^k$:
\begin{equation*}
\hat{g}_{xx,i}=\omega_0 \hat{g}^{(0)}_{xx,i}+\omega_1 \hat{g}^{(1)}_{xx,i}.
\end{equation*}
If there is shock in big stencil $S$,one can use more weights of stencil $S^0$ and less weights of $S^1$. In this way, we can avoid the oscillation.

For EYM equation, we need to approximate $W_{xx}$ using WENO scheme. We described this process as following.

The fourth-order approximation $\hat{W}_{xx,i}=W_{xx}(x_i)+O(h^4)$ is built through the convex combinations of  $\hat{W}^{(k)}_{xx,i}$, defined in each one of the stencils $S^k$:
\begin{equation}
\hat{W}_{xx,i}=\omega_0 \hat{W}^{(0)}_{xx,i}+\omega_1 \hat{W}^{(1)}_{xx,i},
\end{equation}
 where
 \begin{align}
 \hat{W}^{(0)}_{xx,i} &=\frac{1}{h^2}(W_{i+1}-2W_i+W_{i-1}), \\
 \hat{W}^{(1)}_{xx,i}&= \frac{1}{4h^2}(W_{i+2}-2W_i+W_{i-2}) .
 \end{align} 
 
 The $W_{xx}$ can be approximated in the big stencil 
 $$\hat{W}_{xx,i}= \frac{1}{12 h^2}(-W_{i-2}+16 W_{i-1}-30 W_i+16 W_{i+1}-W_{i+2}), $$
 and $$\hat{W}_{xx,i}=d_0 W_{xx,i}^{(0)}+ d_1 W_{xx,i}^{(1)},$$
 where $d_k$ are linear weights 
 $$d_0=\frac{4}{3},\qquad d_1=-\frac{1}{3}.$$ 
 To handle this negative weights, we  consider the following standards procedure 
 
  \begin{align}
 \tilde{\gamma}^+_k &=\frac{1}{2}(d_k+\theta |d_k|),k=0,1,\theta=2,\\
 \tilde{\gamma}^-_k &= \tilde{\gamma}^+_k -d_k,\\
 \sigma^+&=\tilde{\gamma}^+_0+\tilde{\gamma}^+_1=\frac{13}{6}, \\
  \sigma^-&=\tilde{\gamma}^-_0+\tilde{\gamma}^-_1 =\frac{7}{6}.
\end{align}

 Then we obtain the following 
 \begin{align}
  \gamma^+_0 &=\frac{ \tilde{\gamma}^+_0}{\sigma^+}=\frac{12}{13},\\
  \gamma^+_1 &=\frac{ \tilde{\gamma}^+_1}{\sigma^+}=\frac{1}{13},\\
   \gamma^-_0 &=\frac{ \tilde{\gamma}^-_0}{\sigma^-}=\frac{4}{7},\\
  \gamma^-_1 &=\frac{ \tilde{\gamma}^-_1}{\sigma^-}=\frac{3}{7}.
 \end{align}
 
 The smoothness indicators $\beta_0,\beta_1$ are computed as
 $$ \beta_{k}=\int_{x_{i-1}}^{x_{i+1}} h (\frac{d  p_k}{d x})^2 dx + \int_{x_{i-1}}^{x_{i+1}} h^3 (\frac{d^2  p_k}{d x^2})^2 dx, \quad k=0, 1, $$
 \begin{align}
 \beta_0 &=\frac{1}{2}( W_{i+1}-W_{i-1})^2+\frac{8}{3}(W_{i+1} -2W_i+W_{i-1} )^2,\\
 \beta_1 &=\frac{1}{8}(W_{i+2}-W_{i-2})^2+\frac{1}{6}(W_{i+2}-2W_i+W_{i-2})^2.
 \end{align}

 We  obtained the  nonlinear weights  by
 \begin{align} \label{eq28}
 \alpha^{\pm}_k &=\frac{\gamma^{\pm}_k}{(\epsilon +\beta_k)^2},\\
 \omega^{\pm}_k &=\frac{\alpha^\pm_k}{\alpha^\pm_0 + \alpha^\pm_1},\\
 \omega_k&=\sigma^+ \omega^+_r -\sigma^- \omega^-_k, k=0,1
 \end{align}
 where $\epsilon$ is used to avoid the division by zero in the denominator. We take $\epsilon=10^{-10}.$
 
 Then we have
 
 \begin{equation}
 W_{xx}=\omega_0 W^{(0)}_{xx} + \omega_1 W^{(1)}_{xx}.\label{eq:w_xx}
 \end{equation}
 
 We drive a  sufficient condition for fourth order convergence of  Eq.~(\ref{eq:w_xx}).
 Adding and subtracting $\displaystyle \sum_{k=0}^1  d_k \hat{W}^{(k)}_{xx} $ from Eq. (\ref{eq:w_xx})  give
 $$\hat{W}_{xx} =\sum_{k=0}^1  d_k \hat{W}^{(k)}_{xx} + \sum_{k=0}^1  (\omega_k -d_k) \hat{W}^{(k)}_{xx} ,$$
 where the first term on the right hand  side produces the 4th order accurate. The second term must be at least $O(h^4)$ in order  for $\hat{W}_{xx}$ to be approximated at  4th order. Noting that the $W^{(k)}_{xx}$ are 2nd order approximations of $\hat{W}_{xx}(x_i)$, we have
 \begin{align}
 \sum_{i=0}^1 (\omega_k-d_k) \hat{W}^{(k)}_{xx} &= \sum_{i=0}^1 (\omega_k-d_k)  (W_{xx}+O(h^2)) \nonumber\\
 &= \sum_{i=0}^1 (\omega_k-d_k) W_{xx} (x_i) +  \sum_{i=0}^1 (\omega_k-d_k) (O(h^2)),
 \end{align}
 where the first term on the right hand side vanishes due to the normalization of the weights. Thus, it is sufficient to require 
 $$\omega_k=d_k +O(h^2).$$
 
 By  Taylor expansion, we have 
 \begin{align}
 \beta_0 &=\frac{8}{3}(W_{j+1}-2W_j+W_{j-1})^2+\frac{1}{2}(W_{j+1}-W_{j-1})^2\nonumber\\
 &=\frac{8}{3}\left( h^2 W_{xx}(x_j) \right)^2+\frac{1}{2}\left(2h W_x(x_j)+\frac{1}{3}h^3 W_{xxx}(x_j)\right)^2+O(h^6),\\
 \beta_1&=\frac{1}{6}(W_{j+2}-2W_j+W_{j-2})^2+\frac{1}{8}(W_{j+2}-W_{j-2})^2\nonumber\\
              &=\frac{1}{6}\left( (2h)^2W_{xx}(x_j) \right)^2+\frac{1}{8}\left( 4h W_{x}(x_j) +\frac{1}{3}(2h)^3W_{xxx}(x_j) \right)^2+O(h^6).
 \end{align}
 
 If $\displaystyle W_x(x_j) \neq 0$,  $\beta_0=2h^2\left(W_x(x_j)\right)^2[1+O(h^2)]$, $\beta_1=2h^2\left(W_x(x_j)\right)^2[1+O(h^2)]$. 
 
 If $\displaystyle W_x(x_j)=0,W_{xx}(x_j) \neq 0$, $\beta_0=\frac{8}{3} h^4 \left(W_{xx} (x_j)\right)^2[1+O(h^2)]$,$\beta_1=\frac{4}{6} h^4 \left(W_{xx} (x_j)\right)^2[1+O(h^2)]$.
 
 Then we have $\beta_k=D\left(1+O(h^2) \right), k=0,1$,  where $D$ is a constant   independent of the $k$.
 By the  Taylor expansion 
 \begin{align}
 \frac{\gamma^\pm_k }{(\varepsilon +\beta_k)^2} &=\frac{\gamma^\pm_k}{D^2 ( 1+ O(h^2) )^2}, \nonumber\\
                                                                                &=\frac{\gamma^\pm_k}{D^2}(1+O(h^2) ),
    \end{align}
then
\begin{align}
\gamma_k^{\pm} &=\omega^\pm_k ( \sum_{\ell =0}^1 \frac{ \gamma^\pm_\ell}{( \varepsilon +\beta_\ell)^2})(\varepsilon+\beta_k)^2\nonumber\\
                                &=\omega^\pm_k(\frac{1}{D^2}\left(1+O(h^2) )  \right)\left(D(1+O(h^2)) \right)^2 \nonumber\\
                                &=\omega^\pm_k+ O(h^2).
\end{align}
By the definition of $\omega_k$
\begin{align}
\omega_k &=\sigma^+ \omega_k^+ - \sigma^- \omega_k^- \nonumber\\
                   &=\sigma^+(\gamma^+_k+O(h^2))  -\sigma^-(\gamma^-_k+O(h^2)) \nonumber\\
                   &=d_k+O(h^2).
\end{align}
 To achieve fourth order accuracy in critical points, we fix the nonlinear weights $\omega_k,k=0,1$ by a mapping function \cite{map15}
 
 \begin{equation}\label{eq:78mp}
     g_k(\omega)=\frac{\omega(d_k+d_k^2-3d_k\omega+\omega^2)}{d_k^2+\omega(1-2d_k)}.
 \end{equation}
 The mapped nonlinear weights are given by
 \begin{align*}
     \alpha_k&=g_k(\omega_k),k=0,1,\\
     \omega_k^{new}&=\frac{\alpha_k}{\alpha_0+\alpha_1}.
 \end{align*}
 Then, we replace the original nonlinear weights in (\ref{eq:w_xx}) by $\omega_k^{new}$.
 This method worked well.

On the other hand,we propose a simple modified limiting procedure:
$$ \beta_k =\left\{
\begin{aligned}
&0,           &R(\beta)  <\xi\\
&\beta_k,& otherwise
\end{aligned}
\right.
$$

where

$$R(\beta)=\max\limits_{0\leq k\leq 1} \beta_k,$$
and $\xi$ can be chosen suitably. 
The basic idea behind this method is that
in the smooth region, we just need the nonlinear weights to be the linear weights $d_k$.Then there is no accuracy loss phenomena in the critical point.

 \subsection{WENO5 scheme for the $W_x$ and convection term }
 
We give  the  left bias fifth order finite difference  WENO  approximate of the first  derivative  $W_x$ at the grid point $x_j$:
\begin{align}
 W_{x,j}^{-} =\frac{1}{h}(\hat{W}_{j+\frac{1}{2}}-\hat{W}_{j-\frac{1}{2}}).
\end{align}
The  numerical flux $\hat{W}_j$ is given by
\begin{align}\label{weigh3}
	\hat{W}_{j+\frac{1}{2}}=\omega_1 \hat{W}^{(1)}_{j+\frac{1}{2}}+\omega_2 \hat{W}^{(2)}_{j+\frac{1}{2}}+ \omega_3 \hat{W}^{(3)}_{j+\frac{1}{2}},
\end{align}
where $\hat{W}^{(i)}_{j+\frac{1}{2}},i=1,2,3$, are three third order fluxes on the three difference small  stencils given by 
\begin{align}
	\hat{W}^{(1)}_{j+\frac{1}{2}} &=\frac{1}{3}W_{j-2}-\frac{7}{6}W_{j-1}+\frac{11}{6}W_j, \\
	\hat{W}^{(2)}_{j+\frac{1}{2}} &=-\frac{1}{6}W_{j-1}+\frac{5}{6}W_{j}+\frac{1}{3}W_{j+1}, \\
	\hat{W}^{(3)}_{j+\frac{1}{2}} &=\frac{1}{3}W_{j}+\frac{5}{6}W_{j+1}-\frac{1}{6}W_{j+2} .
\end{align}
The nonlinear weights $\omega_i$ are given by 
\begin{align}
	\omega_i &=\frac{\tilde{\omega}_i}{\sum_{k=1}^{3} \tilde{\omega }_k},\\
	\tilde{\omega}_k &=\frac{\gamma_k}{ ( \varepsilon+\beta_k )^2},
\end{align}
with the linear weights $\gamma_k$ given by 
$$\gamma_1 =\frac{1}{10},\qquad \gamma_2=\frac{6}{10},\qquad\gamma_3=\frac{3}{10}.$$
The smoothness indicators $\beta_k$ are given by 
\begin{align}
	\beta_1 &=\frac{13}{12}(W_{j-2}-2W_{j-1}+W_j )^2   \nonumber \\
	               &~~~+\frac{1}{4}(W_{j-2}-4W_{j-1}+3W_j )^2, \\
	\beta_2 &=\frac{13}{12}(W_{j-1}-2W_j+W_{j+1} )^2  \nonumber \\
	                 &~~~+\frac{1}{4}(W_{j-1}-W_{j+1})^2, \\
	\beta_3 &=\frac{13}{12}(W_{j}-2W_{j+1}+W_{j+2} )^2  \nonumber\\
	                 &~~~+\frac{1}{4}(3W_j-4W_{j+1}+W_{j+2})^2.
\end{align}

The right bias fifth order finite difference  WENO  approximate $W^{+}_x$ is mirror symmetric to that for $W^-_x$.

To achieve fifth order accuracy in the critical point, one can fix the nonlinear weights $\omega_k$ by a mapping described in Eq. (\ref{eq:78mp}). For more details, one can see \cite{map15}. Another new idea is given by R. Borges \cite{Borg}, which is called WENO-Z scheme. The novel method is to use the big stencil to construct a new smoothness indicator of higher order than the classical smoothness indicators $\beta_k$. The new indicator is denoted as $\tau_5$
 $$\tau_5=|\beta_3-\beta_1|.$$
We define the new smoothness indicators $$\beta_k^{z}=( \frac{\beta_k+\varepsilon}{\beta_k+\tau_5+\varepsilon}),   \quad k=1,2,3,$$
and the new WENO weights ${\omega}^z_k$ as
\begin{align}\label{eq:Borg}
    \omega^z_k &=\frac{\alpha^z_k}{\sum_{i=1}^3\alpha^z_i}\\
    \alpha^z_i&=d_i(1+(\frac{\tau_5}{\beta_i+\varepsilon})^q), \quad i=1,2,3,
\end{align}
where $\varepsilon=10^{-10},q=2.$

Next, we consider the WENO5 approximations for $f(W)_x.$ Since $f'(W)=(1-W^2)^2 \geq 0$, then $f(W)_x$  at $x_j$ can be represented  as \cite{shu2}
\begin{equation}\label{eq:flux}
 \frac{1}{h}(\hat{f}_{j+\frac{1}{2}}-\hat{f}_{j-\frac{1}{2}}).   
\end{equation}

The numerical flux $\hat{f}_{j+\frac{1}{2}}$ can be reconstructed by the point value $f(W_j)$ as the  following procedure

\begin{align*}
	\hat{f}_{j+\frac{1}{2}}=\omega_1 \hat{f}^{(1)}_{j+\frac{1}{2}}+\omega_2 \hat{f}^{(2)}_{j+\frac{1}{2}}+ \omega_3 \hat{f}^{(3)}_{j+\frac{1}{2}},
\end{align*}
where $\hat{f}^{(i)}_{j+\frac{1}{2}},i=1,2,3$, are three third order fluxes on the three difference small  stencils given by 
\begin{align*}
	\hat{f}^{(1)}_{j+\frac{1}{2}} &=\frac{1}{3}f_{j-2}-\frac{7}{6}f_{j-1}+\frac{11}{6}f_j, \\
	\hat{f}^{(2)}_{j+\frac{1}{2}} &=-\frac{1}{6}f_{j-1}+\frac{5}{6}f_{j}+\frac{1}{3}f_{j+1}, \\
	\hat{f}^{(3)}_{j+\frac{1}{2}} &=\frac{1}{3}f_{j}+\frac{5}{6}f_{j+1}-\frac{1}{6}f_{j+2} .
\end{align*}
The nonlinear weights $\omega_i$ are given by 
\begin{align*}
	\omega_i &=\frac{\tilde{\omega}_i}{\sum_{k=1}^{3} \tilde{\omega }_k},\\
	\tilde{\omega}_k &=\frac{\gamma_k}{ ( \varepsilon+\beta_k )^2},
\end{align*}
with the linear weights $\gamma_k$ given by 
$$\gamma_1 =\frac{1}{10},\qquad\gamma_2=\frac{6}{10},\qquad\gamma_3=\frac{3}{10}.$$
The smoothness indicators $\beta_k$ are given by 
\begin{align*}
	\beta_1 &=\frac{13}{12}(f_{j-2}-2f_{j-1}+f_j )^2   \nonumber \\
	               &~~~+\frac{1}{4}(f_{j-2}-4f_{j-1}+3f_j )^2, \\
	\beta_2 &=\frac{13}{12}(f_{j-1}-2f_j+f_{j+1} )^2  \nonumber \\
	                 &~~~+\frac{1}{4}(f_{j-1}-f_{j+1})^2, \\
	\beta_3 &=\frac{13}{12}(f_{j}-2f_{j+1}+f_{j+2} )^2  \nonumber\\
	                 &~~~+\frac{1}{4}(3f_j-4f_{j+1}+f_{j+2})^2.
\end{align*}
We can fix this nonlinear weights using (\ref{eq:Borg}).


\subsection{Explicit/Implicit WENO scheme for the Yang-Mills equation}
 
In this section we construct the  fourth order WENO scheme for the convection-diffusion equation

 $$W_t+a(x)W_x+\frac{1}{r^2}f(W)_x=\tilde{A}W_{xx}+(1-W^2)W.$$
 Define 
 $$a_i^+=\frac{1}{2}(a_i+ |a_i| ),a_i^-=\frac{1}{2}(a_i- |a_i| ).$$
  We have the following  explicit WENO scheme
  
 $$ \tilde{A}^n_i\hat{W}^n_{xx,i}+(1-(W_i^n)^2)W_i^n= \frac{1}{dt}( W_i^{n+1}-W_i^n ) +a^+_i\hat{W}_i^{n-}  + a^-_i\hat{W}_i^{n+}+\frac{1}{r_i^2h}(\hat{f}^n_{i+\frac{1}{2}}-\hat{f}^n_{i-\frac{1}{2}}),$$
 where $\hat{f}^n_{i\pm\frac{1}{2}}$ is constructed in (\ref{eq:flux}).
 
 The CFL condition is  given by $$dt < \frac{0.3 h^2}{2\max_i A_i+\max_i|c_i|h},$$
 where 
$$c=2A-1+\frac{1}{r^2}(1-W^2)^2.$$

  To accelerate  decay, we design an implicit scheme for EYM equations as following 
  \begin{equation}\label{eq:weno_w} 
      \tilde{A}^n_i\hat{W}^{n+1}_{xx,i}+(1-(W_i^n)^2)W_i^n=
    \frac{1}{dt}( W_i^{n+1}-W_i^n ) +a^+_i\hat{W}_i^{n-}  + a^-_i\hat{W}_i^{n+}+\frac{1}{r_i^2h}(\hat{f}^n_{i+\frac{1}{2}}-\hat{f}^n_{i-\frac{1}{2}}), 
  \end{equation}
   where $\hat{W}^{n+1}_{xx,i}$ can be approximated by
   $$\hat{W}^{n+1}_{xx,i}=\omega_0 \frac{1}{h^2}(W_{i+1}^{n+1} -2W_i^{n+1}+W^{n+1}_{i-1}) + \omega_1 \frac{1}{4h^2}(W_{i+2}^{n+1} -2W_i^{n+1}+W^{n+1}_{i-2}),$$
   and the nonlinear weights $\omega_k, k=0,1$ are calculated by $W^n$ as (\ref{eq28}).

\section{WENO schemes for the Einstein constraint equation}

\subsection{WENO type Adams solver }
First step, we consider a simple ODE problem 
 \begin{align}
 y_x&=f(x),\\
 y(x_0)&=y_0,
 \end{align}
 where $f(x)$ is discontinued or very sharp  at somewhere. The high order integral could cause numerical oscillation  near the discontinued point. 
Integral the equation in interval $I_{i+1/2}=[x_i,x_{i+1}]$, then
$$y_{i+1}-y_i=\int_{x_i}^{x_{i+1}} f(x) dx,$$
where  $\displaystyle \int_{x_i}^{x_{i+1}} f(x) dx$ can be approximated by the  WENO integral as following.

 We  chose two  sub stencils $S^0=\{ x_{i-2}, x_{i-1}, x_i\},S^1=\{ x_{i-1}, x_{i}, x_{i+1}\}.$
 There is a unique polynomial $p_r(x)$ of degree at most $2$ which interpolates $ f(x)$ at the nodes in $S^r, r=0,1.$ We denote the integral of $p_r(x)$ on $I_{i+\frac{1}{2}}$ by $J^{(r)} ,r=0,1$

 $$ J^{(0)}=\frac{h}{12}(23 f_i-16 f_{i-1}+5f_{i-2}), J^{(1)}=\frac{h}{12}(8f_i-f_{i-1}+5f_{i+1}). $$

In  the  large stencil  $S=\{  x_{i-2}, x_{i-1}, x_i, x_{i+1}\}, $ the  integral of $ \displaystyle \int_{x_i}^{x_{i+1}}  f(x) dx$ can be approximated with the linear  coefficients $$J=\frac{h}{24}(19 f_i-5 f_{i-1}+f_{i-2} +9 f_{i+1}) .$$

  The WENO integration would take a convex combination of $J^{(r)}$ defined above  as a new approximation to the integral $\displaystyle  \int_{I_{i+\frac{1}{2}}} f(x) dx $
  
  $$ J=\omega_0 J^{(0)}+\omega_1J^{(1)}.$$
  
  We ask the nonlinear $\omega_r \geq 0$ and $\omega_0+\omega_1=1$ for stability and consistency.
  
  We know that  for smooth $g(x)$, then
  $$J=d_0 J^{(0)}+d_1 J^{{1}} =\int_{I_{i+\frac{1}{2}}} f(x) dx+ O(h^{2k-1}),$$
  Where $d_0=\frac{1}{10},d_1=\frac{9}{10}.$

   In the smooth case, we hope to have $\omega_r=d_r+O(h^{k-1}),r=0,1$ so that $5th$ order accuracy can be achieved for the integral. When the function $g$ is discontinued at  one stencil, we ask the corresponding weights $\omega_r$ to essentially 0 to  avoid  oscillation.    We can construct the  nonlinear weights as following:
  
  First, we  construct the smoothness indicators in every small stencil $S_i^r, r=0,1$, 
  $$\beta_{r}=\int_{x_{i-\frac{1}{2}}}^{x_{i+\frac{1}{2}}} h (\frac{d  p_r}{d x})^2  +  h^3 (\frac{d^2  p_r}{d x^2})^2 dx  .$$
  Then we have 
  \begin{align}
  \beta_0 &=\frac{1}{4}( 3 f_i-4 f_{i-1}+f_{i-2})^2+\frac{13}{12}(f_i-2f_{i-1}+f_{i-2})^2, \\
  \beta_1 &=\frac{1}{4}(f_{i+1}-f_{i-1})^2+\frac{13}{12}(f_{i+1}-2f_{i}+f_{i-1})^2,
  \end{align}
  or
 $$\beta_{r}=\int_{x_i}^{x_{i+1}} h (\frac{d  p_r}{d x})^2  +  h^3 (\frac{d^2  p_r}{d x^2})^2 dx  .$$
  Then we have 
  \begin{align}
  \beta_0 &=( 2 f_i-3f_{i-1}+f_{i-2})^2+\frac{13}{12}(f_i-2f_{i-1}+f_{i-2})^2, \\
  \beta_1 &=(f_{i+1}-f_{i})^2+\frac{13}{12}(f_{i+1}-2f_{i}+f_{i-1})^2.
  \end{align}

  Second step, we construct $\alpha_r, r=0,1.$
  $$\alpha_r=\frac{d_r}{(\varepsilon +\beta_r)^2}, r=0,1.$$
  
  Finally, we get  the nonlinear weights for central  WENO  integral 
  $$\omega_r=\frac{\alpha_r}{\alpha_0+\alpha_1},r=0,1.$$
   \begin{align}
   y_{i+1} -y_i  &=\omega_0 \frac{h}{12} (23f_i  -16 f_{i-1}  + 5 f_{i-2}  )\nonumber\\
                                                            &+\omega_1 \frac{h}{12} (8f_i  - f_{i-1}  + 5 f_{i+1} ).
   \end{align}
   
   In the smooth region $\omega_k=d_k$,     which is the familiar Adams-Moulton 4  scheme:
   $$ y_{i+1}-y_i =\frac{h}{24}(19 f_i+9 f_{i+1}-5f_{i-1}+f_{i-2}).$$
   At the left  boundary, we use the following scheme
   \begin{align}
   y_{i+1}-y_{i}=\frac{h}{24}(19 f_{i+1} +9f_{i} -5 f_{i+2} +f_{i+3}),i=1.
  \end{align}
   
 \subsection{Three sub stencils WENO integration }   
 In this section, we design   WENO integration on three sub stencils.
  Consider  sub stencils $S^0=\{ x_{i-2},x_{i-1}\},\ S^1=\{x_{i-1},x_i\},\ S^2=\{x_i, x_{i+1}\}.$
 Denote $p_k(x),k=0,1,2$ be the first order Interpolation polynomials of $f(x)$ at each sub stencils and $J^k, k=0,1,2$ are the  integral of $p_k(x)$ on the interval 
 $[x_i, x_{i+1}]$,
 \begin{align}
 J^0&=\int_{x_i}^{x_{i+1}} p_0(x) dx =\frac{h}{2}(5f_{i-1}-3f_{i-2}), \\
 J^1&=\int_{x_i}^{x_{i+1}} p_1(x) dx =\frac{h}{2}(-f_{i-1}+3f_{i}), \\
 J^2&=\int_{x_i}^{x_{i+1}} p_2(x) dx =\frac{h}{2}(f_{i}+3f_{i+1}) .
 \end{align}
 
 Integrate  $f(x)$ on the large stencils, we have 
 \begin{align*}
 J&=\frac{h}{24}(f_{i-2}-5f_{i-1}+19f_i+9f_{i+1})\\
  &=d_0 J^0+d_1 J^1+d_2 J^2,
 \end{align*}
 where the linear weights $d_k, k=0,1,2$ are given by
 \begin{align*}
   d_0&=-\frac{1}{36},\\
   d_1&=\frac{10}{36},\\
   d_2&=\frac{27}{36}.
 \end{align*}
 To handle this negative weights, we  consider the following standards procedure. 
 
We define 
\begin{align*}
\tilde{\gamma}^+_k &=\frac{1}{2}(d_k +\theta |d_k|),\theta=3,k=0,1,2\\
\tilde{\gamma}^-_k  &=\tilde{\gamma}^+_k-d_k,
\end{align*}
and 
\begin{align*}
\tilde{\gamma}_0^+ &=\frac{1}{36},\qquad \tilde{\gamma}_0^- =\frac{2}{36} ,\\
\tilde{\gamma}_1^+ &=\frac{20}{36}, \qquad\tilde{\gamma}_1^- =\frac{10}{36}, \\
\tilde{\gamma}_2^+ &=\frac{54}{36}, \qquad\tilde{\gamma}_2^- =\frac{27}{36}. \\
\end{align*}

Define $\sigma^\pm$ as follows
\begin{align*}
\sigma^+ &=\sum_{\ell=0}^2 \tilde\gamma_\ell^+=\frac{75}{36},\\
\sigma^- &=\sum_{\ell=0}^2  \tilde\gamma_\ell^-=\frac{39}{36}.
\end{align*}

Define $\displaystyle  \gamma_k^\pm =\frac{\tilde{\gamma}_k^\pm}{\sigma^\pm},$
 then
\begin{align*}
\gamma^+_0 &=\frac{1}{75} ,\qquad\gamma^-_0=\frac{2}{39}, \\
\gamma^+_1 &=\frac{20}{75} ,\qquad\gamma^-_1=\frac{10}{39}, \\
\gamma^+_2 &=\frac{54}{75} ,\qquad\gamma^-_1=\frac{27}{39}. \\
\end{align*}

The smoothness indicators $\beta_k ,k=0,1,2$ are given by
\begin{align*}
\beta_0&=(f_{i-1}-f_{i-2})^2 ,\\
\beta_1&=(f_{i}-f_{i-1})^2 ,\\
\beta_2&=(f_{i+1}-f_{i})^2 .
\end{align*}

The nonlinear weights are computed by
\begin{align*}
\alpha_k^{\pm} &=\frac{\gamma^\pm_k}{(\varepsilon +\beta_k)^2}, \\
\omega^\pm_k  &=\frac{\alpha_k^\pm}{\sum_{j=0}^2 \alpha_j^\pm}.
\end{align*}
Then
\begin{equation}\label{eq123}
\omega_k=\sigma^+ \omega_k^+-\sigma^-\omega_k^-, \quad k=0,1,2.
\end{equation}
Finally, we have  a WENO  solver for the ODE $y_x=f(x)$:
\begin{align}\label{eq125}
y_{i+1}-y_i&=\omega_0 J^0+\omega_1 J^1+\omega_2 J^2\nonumber\\
                     &=\omega_0 \frac{h}{2}(5f_{i-1}-3f_{i-2})+\omega_1 \frac{h}{2}(-f_{i-1}+3f_i) +\omega_2 \frac{h}{2}(f_i+f_{i+1}).
\end{align}

\subsection{WENO type Adams solver for constraint equations}

We define the auxiliary function $\Box, q, S$ as  follows
 \begin{align*}
 \Box&=2W_r^2,\\
 S&=1-\frac{1}{r^2}(1-W^2)^2,\\
 q_x &=1+\Box.
 \end{align*}

In this section, we consider the Einstein constraint equation  $$  (Ae^q)_x =S e^q.$$
Integrate in the interval $I_{i+\frac{1}{2}}=[x_i,x_{i+1}]$, $$ A_{i+1} e^{q_{i+1}}- A_{i} e^{q_{i}} = \int_{x_i}^{x_{i+1}} S e^q dx .$$

$\displaystyle \int_{x_i}^{x_{i+1}}  S e^q dx$ can be approximate by the  WENO  described   above.

 We  chose two  sub stencils $S^0=\{ x_{i-2}, x_{i-1}, x_i\},S^1=\{ x_{i-1}, x_{i}, x_{i+1}\}.$
 There is a unique polynomial $p_r(x)$ of degree at most $2$ which interpolates $ g:=S e^q$ at the nodes in $S^r, r=0,1.$ We denote the integral of $p_r(x)$ on $I_{i+\frac{1}{2}}$ by $J^{(r)} ,r=0,1$,

 $$ J^{(0)}=\frac{h}{12}(23 g_i-16 g_{i-1}+5g_{i-2}), J^{(1)}=\frac{h}{12}(8g_i-g_{i-1}+5g_{i+1}). $$
 
In  the  large stencil  $S=\{  x_{i-2}, x_{i-1}, x_i, x_{i+1}\},$ the  integral of $ \displaystyle \int_{x_i}^{x_{i+1}}  g dx$ can be approximated with the linear  coefficients $$J=\frac{h}{24}(19 g_i-5 g_{i-1}+g_{i-2} +9 g_{i+1}) .$$

  The WENO integration would take a convex combination of $J^{(r)}$ defined above  as a new approximation to the integral $\displaystyle  \int_{I_{i+\frac{1}{2}}} g(u,x) dx $
  
  $$ J=\omega_0 J^{(0)}+\omega_1J^{(1)}.$$
  
  We ask the nonlinear $\omega_r \geq 0$ and $\omega_0+\omega_1=1$ for stability and consistency.
  
  We know that  for smooth $g(x)$, then
  $$J=d_0 J^{(0)}+d_1 J^{{1}} =\int_{I_{i+\frac{1}{2}}} g(x) dx+ O(h^{2k-1}),$$
  
  where $d_0=\frac{1}{10},d_1=\frac{9}{10}.$

   In the smooth case, we hope to have $\omega_r=d_r+O(h^{k-1}),r=0,1$ so that $5th$ order accuracy can be achieved for the integral. When the function $g$ is discontinued at  one stencil, we ask the corresponding weights $\omega_r$ to essentially 0 to  avoid  oscillation.    We can construct the  nonlinear weights as following.
  
  First ,we  construct the smoothness indicators in every small stencil $S_i^r, r=0,1,$
  $$\beta^{r}=\int_{x_{i-\frac{1}{2}}}^{x_{i+\frac{1}{2}}} h (\frac{d  p_r}{d x})^2  +  h^3 (\frac{d^2  p_r}{d x^2})^2 dx  .$$
  Then we have 
  \begin{align}
  \beta_0 &=\frac{1}{4}( 3 g_i-4 g_{i-1}+g_{i-2})^2+\frac{13}{12}(g_i-2g_{i-1}+g_{i-2})^2, \\
  \beta_1 &=\frac{1}{4}(g_{i+1}-g_{i-1})^2+\frac{13}{12}(g_{i+1}-2g_{i}+g_{i-1})^2.
  \end{align}
  
  Second step, we construct $\alpha_r, r=0,1$,
  $$\alpha_r=\frac{d_r}{(\varepsilon +\beta_r)^2}, \quad r=0,1.$$
  
  Finally, we get  the nonlinear weights for central  WENO  integral 
  $$\omega_r=\frac{\alpha_r}{\alpha_0+\alpha_1},r=0,1,$$

   
   \begin{align}
   A_{i+1} -A_i e^{q_i-q_{i+1}} &=\omega_0 \frac{h}{12} (23S_i e^{q_i-q_{i+1}} -16 S_{i-1} e^{q_{i-1}-q_{i+1}} + 5 S_{i-2} e^{q_{i-2}-q_{i+1}} )\nonumber\\
                                                            &+\omega_1 \frac{h}{12} (8S_i e^{q_i-q_{i+1}} - S_{i-1} e^{q_{i-1}-q_{i+1}} + 5 S_{i+1} ),
   \end{align} 
   where the $\displaystyle q_j-q_i=\int_{x_i} ^{x_j} 1+\Box \  dx $ and the integral can be approximated by the WENO procedure  described above. 
   
   \textbf{Remark} To approximate the $\Box=\frac{2}{r^2}W_x^2$, we only use linear scheme. However,  in order to avoid possible oscillations, we follow a simple "min-mod" principle.
   We define the numerical approximation of $W_x^2$ in $x_i$ is
   $\hat{W}_{x,i}^2$,which can be given by
   $$\hat{W}_{x,i}^2=\min \left((W_{x,i}^-)^2,(W_{x,i}^+)^2, (W^c_{x,i})^2  \right),$$
   where $W^c_{x,i}=\frac{1}{2}(W_{x,i}^- +W_{x,i}^+),$
    $W^-_{x,i}$ is the left bias fifths  order linear approximation of $W_x$ and $W^+_{x,i}$ is the right bias fifths  order approximation.
   \subsection{WENO-Admas schemes in three sub stencils }
   Define 
   \begin{align}\label{eq:138}
   g_{i-2}&=S_{i-2} e^{q_{i-2}-q_{i+1}},\\
   g_{i-1}&=S_{i-1}e^{q_{i-1}-q_{i+1}},\\
   g_i &=S_i e^{q_i-q_{i+1}} ,\\
   g_{i+1}&=S_{i+1}.
   \end{align}
   Then we have the  three sub stencils WENO type Admas schemes
   \begin{equation}\label{eq:139}
    A_{i+1}-A_i e^{q_i-q_{i+1}} =\omega_0 \frac{h}{2}(5g_{i-1}-3g_{i-2})+\omega_1 \frac{h}{2}(-g_{i-1}+3g_i) +\omega_2 \frac{h}{2}(g_i+g_{i+1}),
   \end{equation}
   where the nonlinear weights $\omega_k$  are computed as (\ref{eq123}) and $\displaystyle  q_j-q_i=\int_{x_i}^{x_j}  1+\Box \  dx $ are calculated  by the method described  in  (\ref{eq125}).
   
   At the left  boundary, we use the following  linear scheme
   \begin{align}
   A_{i+1}-A_i e^{q_i-q_{i+1}}=\frac{h}{24}(19 S_{i+1} +9S_{i}e^{q_i-q_{i+1}} -5 S_{i+2}e^{q_{i+2}-q_{i+1}} +S_{i+3}e^{q_{i+3}-q_{i+1}}),i=1.
  \end{align}
  We combine  (\ref{eq:weno_w}) and (\ref{eq:139}) together
  \begin{align}
       A_{i+1}-A_i e^{q_i-q_{i+1}} =\omega_0 \frac{h}{2}(5g_{i-1}-3g_{i-2})+\omega_1 \frac{h}{2}(-g_{i-1}+3g_i) +\omega_2 \frac{h}{2}(g_i+g_{i+1}),\label{eq:141}\\
          \tilde{A}_i\hat{W}^{n+1}_{xx,i}+(1-(W_i^n)^2)W_i^n=
    \frac{1}{dt}( W_i^{n+1}-W_i^n ) +a^+_i\hat{W}_i^{n-}  + a^-_i\hat{W}_i^{n+}+\frac{1}{r_i^2h}(\hat{f}^n_{i+\frac{1}{2}}-\hat{f}^n_{i-\frac{1}{2}})\label{eq:142}.   
  \end{align}

\section{Numerical experiments}

In this section, we provide numerical experiments to demonstrate the effect of our methods. We first test our new WENO approximation on the diffusion term in Section 5.1, and then apply WENO schemes to the complete system of EYM equations in Section 5.2.  Comparison between the first order Lax-Friedrichs scheme and the high-order WENO scheme will be given in Section 5.3.

\subsection{Numerical accuracy test for EYM equations}

In this section, we apply our WENO schemes  (\ref{eq:141})-(\ref{eq:142}) to the complete EYM equations. We consider the computational domain $[-5,5]$ and the following initial boundary conditions: 
\begin{align}\label{eq:bc}
	W(x,0)&=\tanh(10(x-0.1)),\\
	W(-5,t)&=-1,\quad W(5,t)=1,\quad A(-5)=1.
\end{align} 
We march our scheme in time to steady-state at which the maximum absolute point value of $W_t$ reduces to machine zero. The steady-state solutions for $A$ and $W$ using our WENO schemes with 3200 grid points are shown in Fig.~\ref{figWENO}. Here $A$ is Lipschitz continuous. We further test the accuracy of our scheme in the smooth region. Since we do not know the exact solution of EYM equations, we use numerical solutions obtained on a very fine grid with $2^{15}$ points as approximations to the exact solutions. The $L^\infty$ and $L^1$ errors and orders of our scheme are shown in Table \ref{table1}, in which we can clearly observe the expected fourth-order convergence rate. 

 \begin{figure}[!htp]
 \centering
    \subfigure[$A$]{
    \includegraphics[width=12cm]{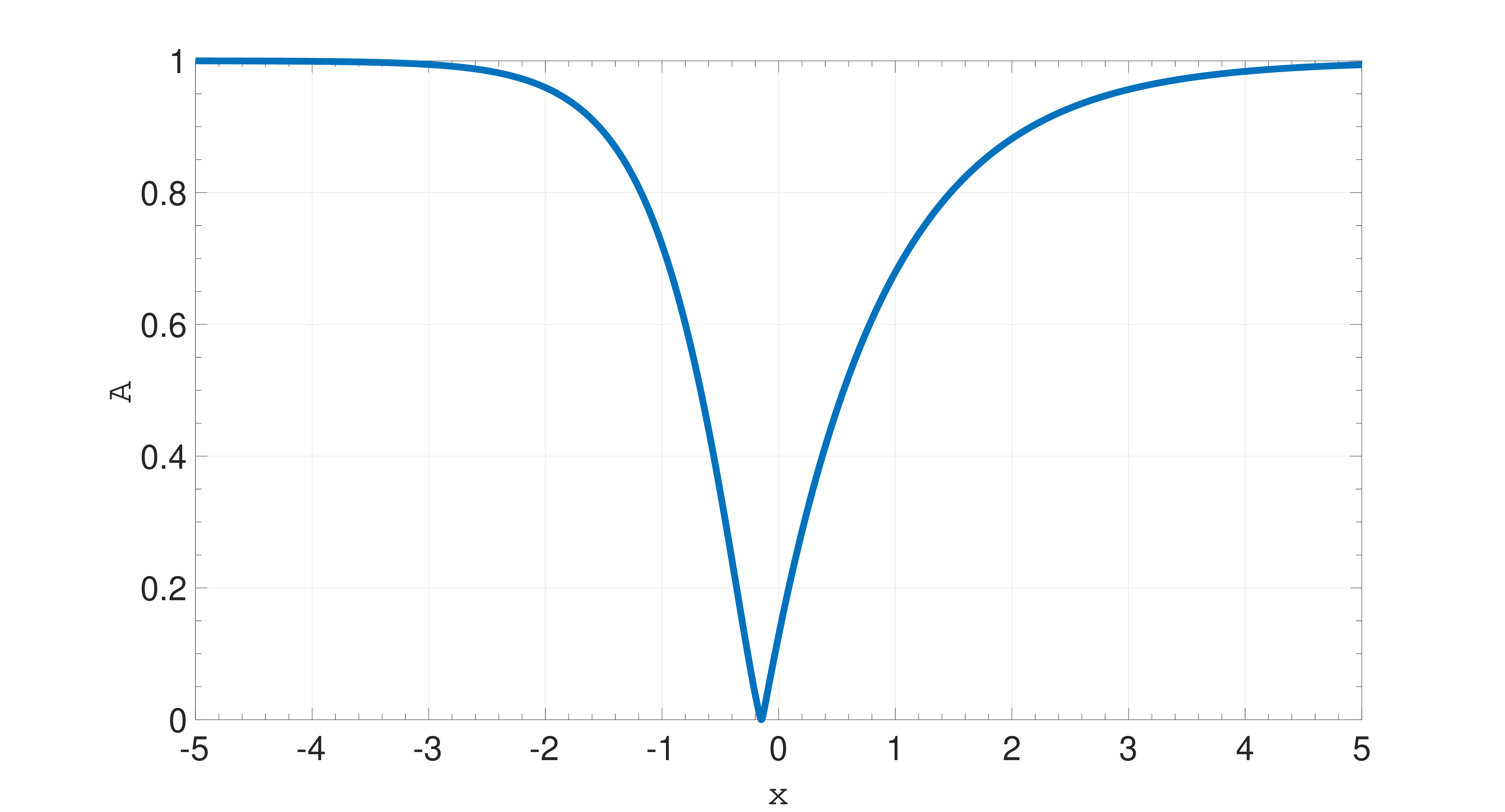} }
    \subfigure[$W$]{
    \includegraphics[width=12cm]{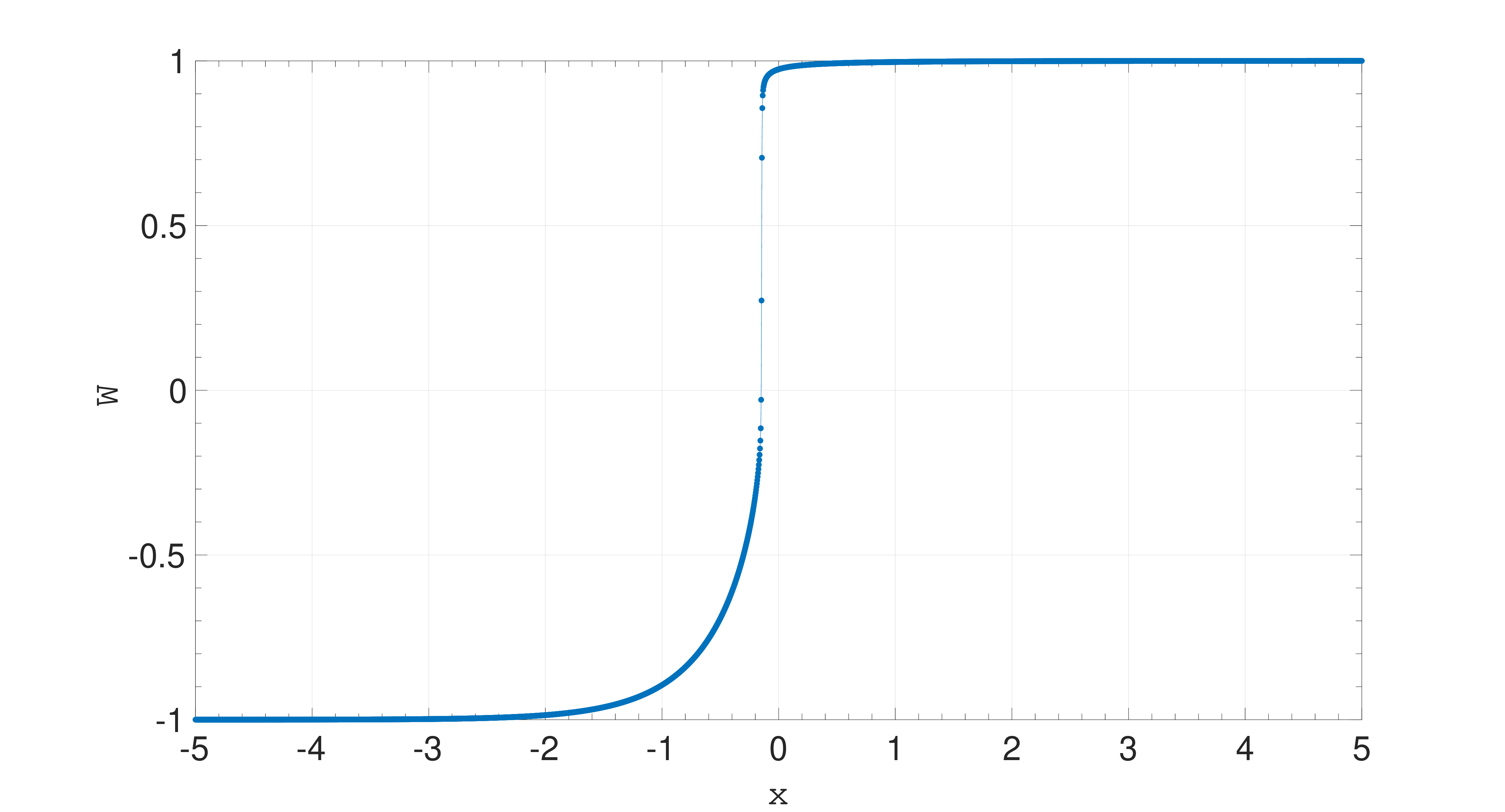} }
    \subfigure[Zoom-in figure of $W$ in the horizon region]{
    \includegraphics[width=12cm]{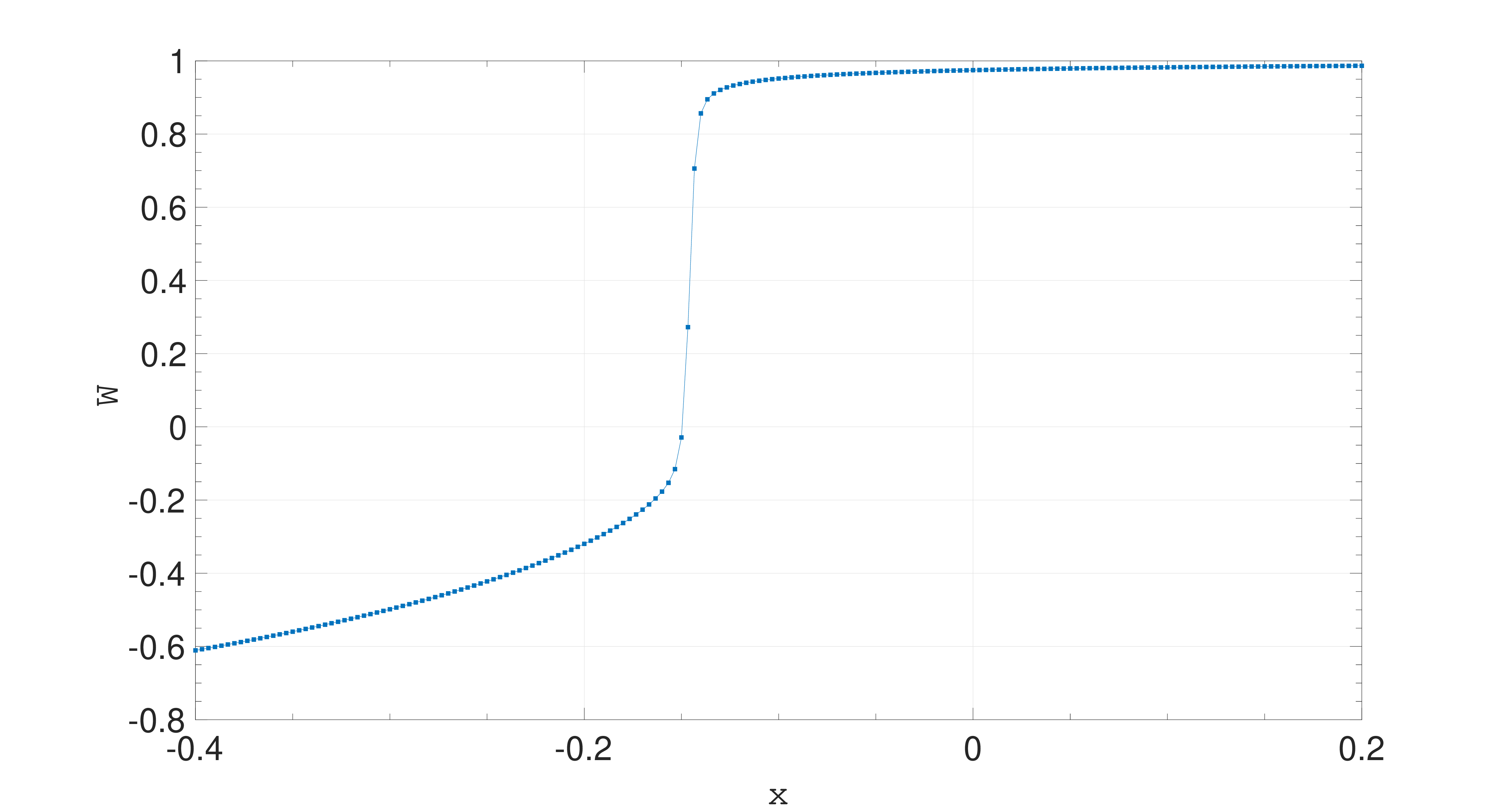} }
 \caption{Solutions to EYM equations solved by WENO schemes with 3200 grid.}
 \label{figWENO}
 \end{figure}

\begin{table*}[!htp]
\centering
\def\arraystretch{2}
\tabcolsep0.3cm
 \begin{tabular}{cccccccccc}
  \hline
  N &&A&&&&&W&&\\
  \cline{2-5} \cline{7-10}
        &$L^1$ error   & Order   &$L^\infty$ error   & Order& &    $L^1$ error  & Order   &  $L^\infty$ error & Order\\
  \hline 
  $100$        &    3.58E-04  &   --    &     1.28E-03   &  --   &&      6.67E-05 & --     &  2.13E-04&  --   \\
  $200$       &    2.61E-05  &    3.78&     9.81E-05  &  3.70&&   5.25E-06   &3.66   &  1.53E-05& 3.79\\
  $400$       &    1.79E-06  &    3.86&     6.63E-06   &  3.88& &  4.07E-07   & 3.68  & 1.16E-06& 3.72\\
  $ 800 $     &    1.19E-07  &    3.91&     4.44E-07   &   3.89&&   2.89E-08  & 3.81 & 7.73E-08&3.90\\
  $ 1600$    &    7.69E-08  &    3.95&    2.89E-08  &   3.94&&   1.93E-09   &  3.91 & 4.96E-09&3.96  \\
  $ 3200$    &   4.59E-09  &     4.06&    1.73E-09  &   4.05&&   1.18E-10   &  4.02 & 2.94E-10&4.07 \\
  \hline 
 \end{tabular}
 \caption{\label{table1} Error table for the fourth-order WENO schemes in the smooth region of the solution.}
\end{table*}


\subsection{Lax-Friedrichs schemes}

In this subsection, we compute the EYM equations by using the Lax-Friedrichs schemes (\ref{eq:24}) and (\ref{eq136}) and compare the results of different schemes. We use the same initial boundary conditions as in the last subsection and still take 3200 grid points. The steady-state solution for the Lax-Friedrichs schemes are shown in Fig.~\ref{figLax}. We also compare the Lax-Friedrichs scheme and WENO scheme in Fig.~\ref{figcom}. For the Lax-Friedrichs scheme, we can observe that there are few point values of $A$ being negative, even at the steady state. However, we show in Fig.~\ref{fig9} that as the mesh size $h \to 0$, we have $\min(A) \to 0$. So even there are few negative point values, as the mesh size close to 0, we would have $A\geq 0$, and then $\tilde{A}=A$ globally.

 \begin{figure}[!htp]
 \centering
    \subfigure[$A$]{
    \includegraphics[width=12cm]{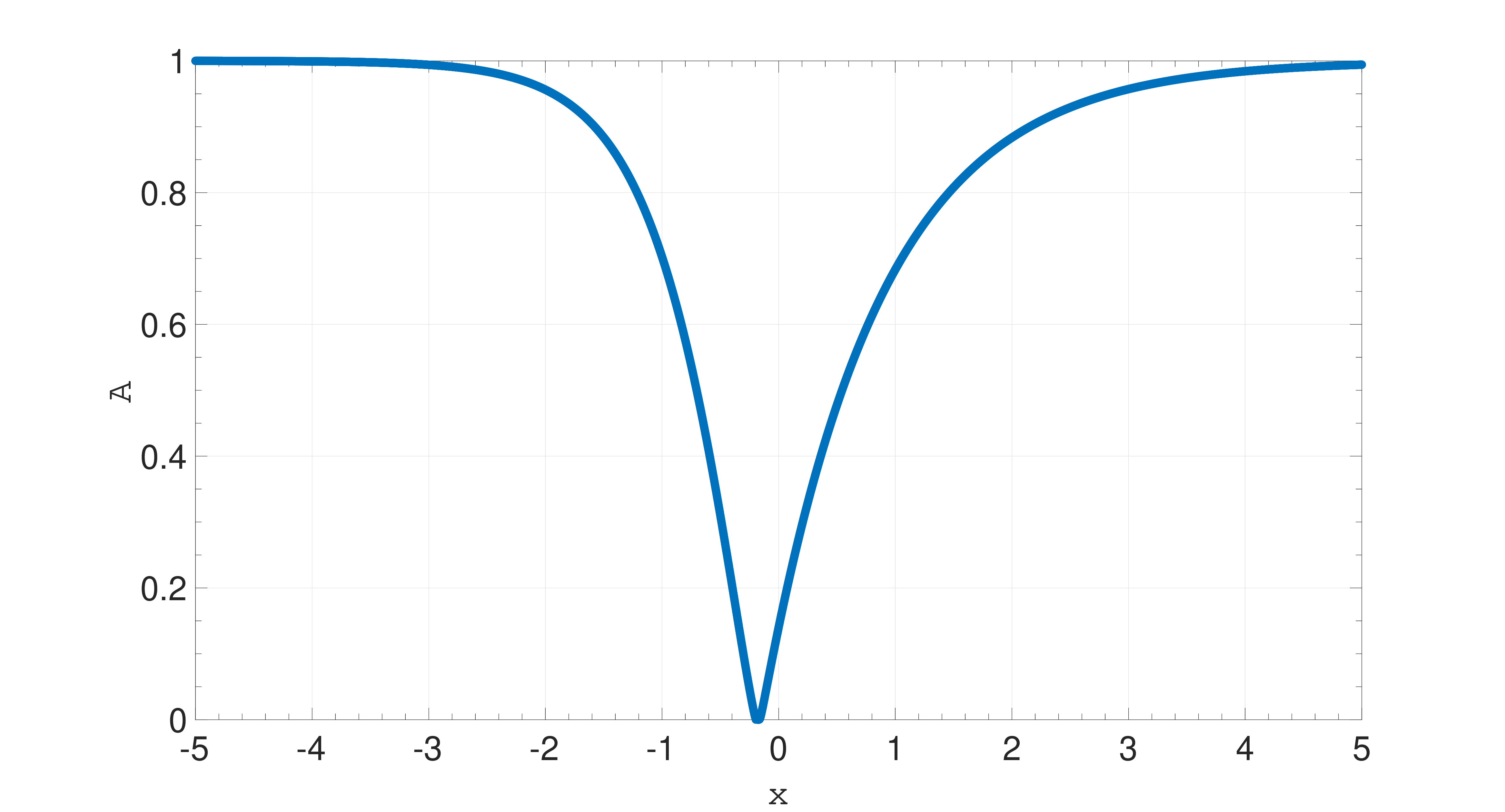} }
    \subfigure[$W$]{
    \includegraphics[width=12cm]{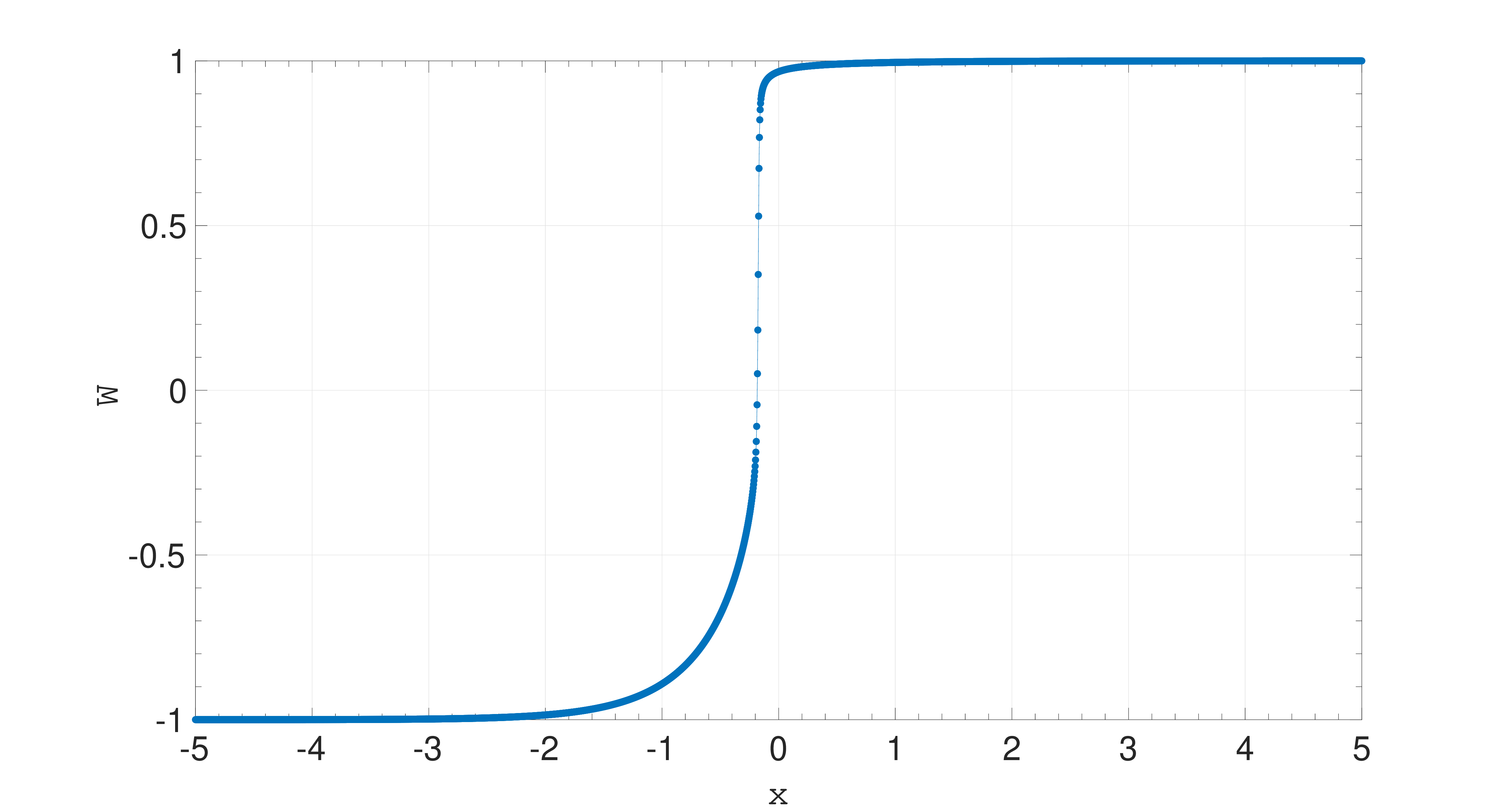} }
 \caption{Solutions to EYM equations solved by Lax Friedrichs schemes with 3200 grid.} \label{figLax}
 \end{figure}

 \begin{figure}[!htp]
 \centering
    \subfigure[$A$]{
    \includegraphics[width=12cm]{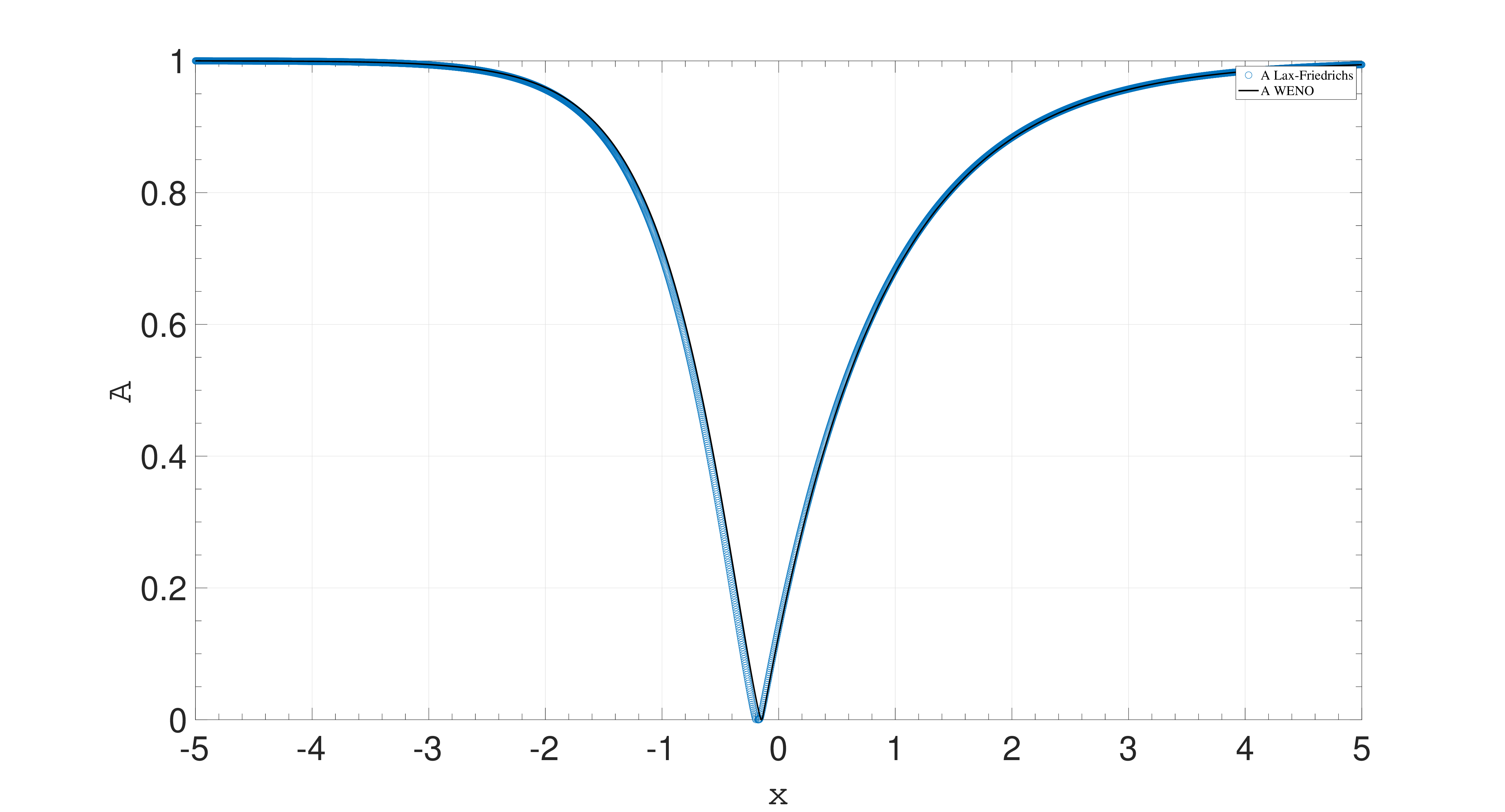} }
    \subfigure[$W$]{
    \includegraphics[width=12cm]{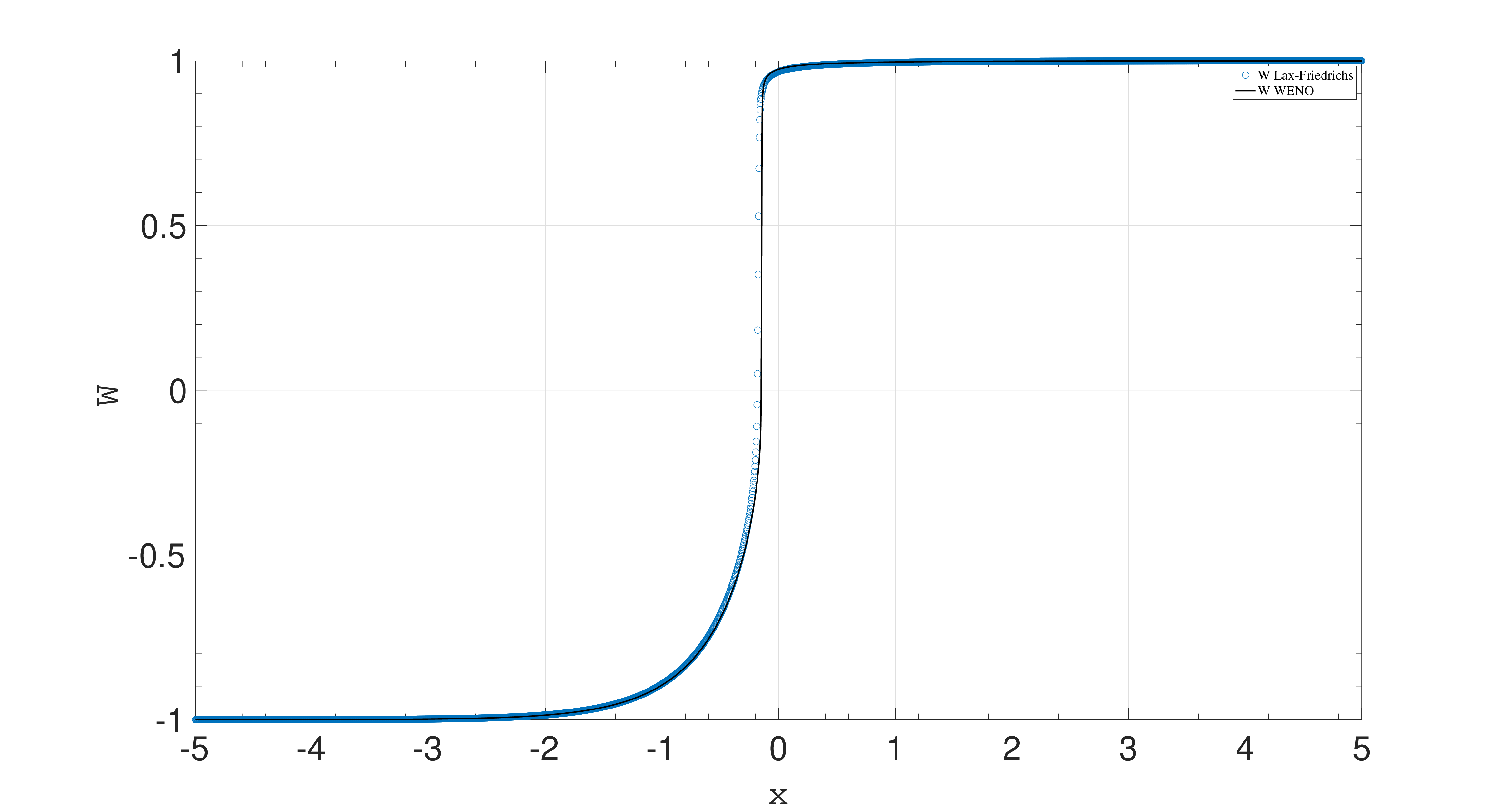} }
 \caption{Comparison of the Lax Friedrichs scheme and the WENO scheme.} \label{figcom}
 \end{figure}

 \begin{figure}[!htp]
	\centering
	  \includegraphics[width=15cm]{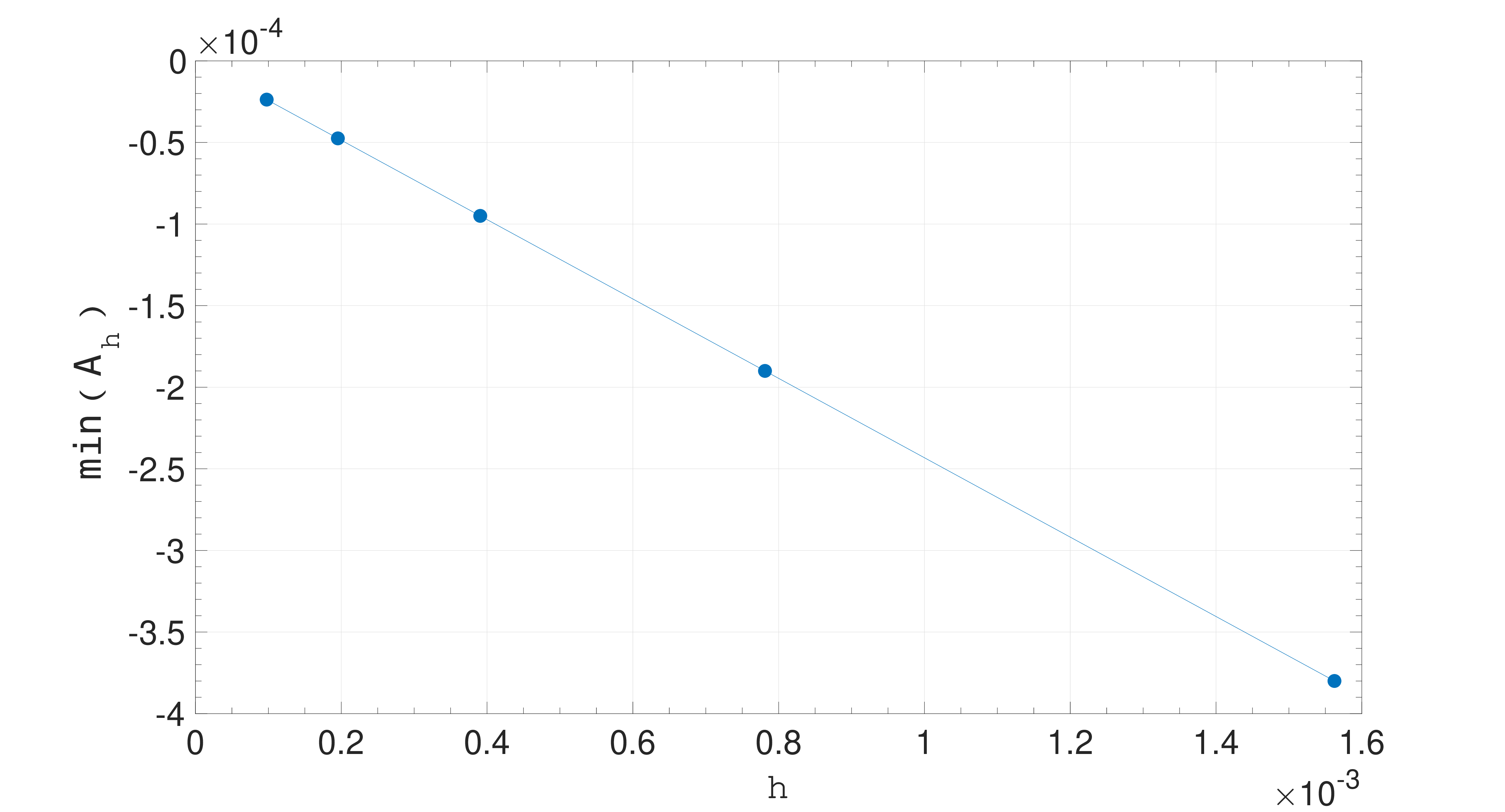}\\
	\caption{The Relationship between $\min(A_h)$ and $h$ in the Lax Friedrichs scheme. As $h \to$ 0, $\min(A_h)=O(h)$. }
	\label{fig9}
\end{figure}

\section{Conclusions}

In this paper, we consider the $SU(2)$ EYM equations and aim to solve for the stable static solutions. We study the first order TVD scheme theoretically and provide new high-order WENO schemes for solving this problem. Numerical experiments are given to show the effect of our schemes.

\end{document}